 \documentclass[conference,romanappendices,letterpaper]{IEEEtran}
 \pdfoutput=1 
\usepackage{hyperref}
\usepackage{subfig}
\usepackage{stfloats}
\IEEEoverridecommandlockouts
\newcommand{\nonl}{\renewcommand{\nl}{\let\nl\oldnl}}
\usepackage{etoolbox}
\AtEndEnvironment{example}{\null\hfill$\square$}%
\AtEndEnvironment{remark}{\null\hfill$\square$}%

\usepackage{enumerate}
\usepackage{multirow,array}
\usepackage{cite}
\usepackage{graphicx}
\usepackage{psfrag}
\usepackage{array}
\usepackage{algorithm2e}
\usepackage{amssymb}
\usepackage{amsfonts}
\usepackage{float}
\usepackage{textcomp}
\usepackage{xcolor}
\usepackage{tabu}
\usepackage{array}
\usepackage{tabularx}
\usepackage{subfiles}

\usepackage{amssymb}
\usepackage{pifont}
\usepackage[amsthm,thmmarks]{ntheorem}

\newcolumntype{P}[1]{>{\centering\arraybackslash}p{#1}}
\newcolumntype{M}[1]{>{\centering\arraybackslash}m{#1}}

 \usepackage{amsmath}

\usepackage{float}

\newtheorem{conjecture*}{Conjecture}
\newtheorem{lemma}{Lemma}
\newtheorem{observation}{Observation}
\newtheorem{corollary}{Corollary}
\newtheorem{claim}{Claim}
\newtheorem{remark}{Remark}

\newtheorem{theorem}{Theorem}

\title{Weak Private Information Retrieval for ~\\ Graph-based Storage under $\epsilon$-privacy}

\begin{document}

\author{Jayesh Seshadri, Prasad Krishnan

\thanks{Jayesh and Dr. Krishnan are with the Signal Processing and Communications Research Center, International Institute of Information Technology, Hyderabad, 500032, India (email: $\{$jayesh.seshadri@research., prasad.krishnan@$\}$iiit.ac.in). 
}}
\maketitle

\allowdisplaybreaks

\begin{abstract}
    The problem of Private Information Retrieval (PIR) across graph-based replicated systems has seen a surge of interest in recent literature. In such systems, servers are modeled as the vertices of a graph, and each file is represented by an edge connecting the pair of servers that store it. While previous research on Graph-PIR (G-PIR) has focused mostly on perfect information-theoretic privacy, relaxing this constraint has the potential for significantly higher retrieval rates. In this work, we consider the setting of Graph-based Weak Private Information Retrieval (G-WPIR), under an $\epsilon$-privacy metric. We adapt an existing G-WPIR scheme for arbitrary graphs to analyze its performance under the $\epsilon$-privacy metric. Additionally, we establish a lower bound on the expected download cost for arbitrary graphs under the $\epsilon$-privacy metric and derive from it an explicit rate upper bound for complete graphs. Finally, we show the achievable rate-privacy trade-offs for two specific graph classes—complete graphs and complete bipartite graphs.
\end{abstract}

\section{INTRODUCTION}
Private Information Retrieval (PIR) enables a client to fetch a specific file from a distributed multi-server storage system while ensuring that no individual server can identify which file is being requested. This privacy is achieved through a query-response mechanism known as a \textit{PIR protocol}. The performance of such a scheme is evaluated by its \textit{rate}, defined as the size of the requested file divided by the total number of bits downloaded from the servers. The \textit{capacity} of a given PIR system is the supremum of the rates over all PIR protocols. PIR has been the subject of extensive research, particularly concerning information-theoretically secure approaches. This surge in interest was catalyzed by the foundational studies of Sun and Jafar on replicated databases with both colluding and non-colluding servers \cite{sun2017capacity, sun2017_T_capacity}, and has since expanded to encompass coded storage models \cite{banawan2018capacity} as well as schemes optimizing the subpacketization \cite{tian2019capacity, Zhouetal_2019_TIT_MDS_optimalmsgsize_lowerupcost}.

Distributed storage systems can be naturally mapped to a (hyper)graph topology, where vertices denote servers and edges (or hyperedges) represent files replicated across two or multiple servers. Private Information Retrieval tailored for such architectures, known as Graph-PIR (G-PIR), was originally introduced in \cite{ravivetal_2018_TIT}. This work established the upper and lower capacity bounds for non-colluding and some colluding server models. Subsequently, G-PIR over simple graphs, where each file is assigned to exactly two distinct servers and no server pair shares more than one file, has seen significant advancements in \cite{Sadehetal_TIT_2023_BoundsonPIRgraphs,ge2025privateinformationretrievalgraphs,Kong_TIT_2025newcapacityboundspir,Yuhangetal_Capof4StarGraph_ISIT_2023}. These contributions specifically characterized the bounds on G-PIR capacity across several specialized network topologies, like star graphs, path graphs, complete graphs, and bipartite graphs. Crucially, because a retrieval protocol formulated for an $N$-vertex complete graph $K_N$ can be directly adapted for any arbitrary $N$-vertex graph, the lower bounds derived for $K_N$ serve as universal lower bounds for all graphs of the same size. Beyond simple graphs, the G-PIR framework has also been broadened to include multi-graph-based storage environments, where server pairs may share multiple files \cite{meel2025privateinformationretrievalmultigraphbased}, alongside developments in symmetric G-PIR formulations \cite{meel2025effectcommonrandomnessreplication,meel2025symmetric,meel2026role}.

A recent advancement in graph-based PIR includes a novel family of schemes applicable to both general and star-topology networks, as proposed in \cite{shanbhag2026private}. Specifically, the general graph protocol described in \cite[Section IV]{shanbhag2026private} leverages a randomized query generation mechanism that operates with the smallest possible \textit{subpacketization} level of $L=1$, which means that the files do not need to be partitioned into smaller fragments to execute the protocol. To initialize this approach, a pre-processing phase is required to partition the storage graph into disjoint independent sets. For an arbitrary storage graph $\mathcal{G}$ consisting of $N$ databases with an independence number of $\alpha(\mathcal{G})$, the scheme guarantees a retrieval rate lower bounded by $\max\{2/(2N-\alpha(\mathcal{G})), 1/(N-1)\}$. In the specific case of a complete graph $K_N$, this evaluates to exactly $1/(N-1)$. While this randomized approach currently outputs the highest known retrieval rates for balanced bipartite graphs, alternative protocols \cite{ge2025privateinformationretrievalgraphs} are known to achieve better rates for complete graphs.

Relaxing the strict privacy requirements of traditional PIR makes it possible to achieve higher retrieval rates at the expense of a controlled amount of information leakage to the servers. Protocols operating under this relaxed notion are known as Weak-PIR (WPIR) schemes. The formal study of WPIR has been established through several lines of research \cite{lin2021multi,jia2019capacity,samy2021asymmetric}. Although early WPIR models quantified this leakage on average—using metrics such as mutual information \cite{lin2021multi} or maximal leakage\cite{zhouetal_2020_WPIR_MaxL}, following the $\epsilon$-private retrieval of Toledo et al.~\cite{toledo2016epsilonldp}, Samy et al. \cite{samy2019capacity} introduced a leaky PIR definition centered on an $\epsilon$-privacy metric. Similar to $\epsilon$-local differential privacy ($\epsilon$-LDP) \cite{duchi2013ldpintro}, this metric strictly bounds the likelihood ratio between any possible realizations of the database queries using a non-negative constant $\epsilon$. Because the $\epsilon$-privacy constraint is enforced absolutely across all individual query realizations rather than merely in expectation, it provides substantially stronger and more rigorous privacy guarantees for each retrieved message. Pointwise leakage-ratio constraints have also been studied for replicated servers, including the scaling of the leakage-ratio exponent with the number of files \cite{zhao2025optimizing} and leaky PIR with side information \cite{huangfu2026leaky}; weakly-private retrieval from heterogeneously trusted servers is studied in \cite{huang2024weakly}, and the capacity of single-server weakly-private retrieval in \cite{lin2020capacity}. The aforementioned alternative relaxed privacy formulations based on Mutual Information leakage and Maximal leakage have subsequently been adapted across a wide range of WPIR architectures, including non-colluding replicated servers \cite{Anandetal_ISIT2025_WPIR, anand2026converseboundssunjafartypeweak}, MDS-coded storage systems \cite{orvedal2024weaklyprivateinformationretrievalmdscoded, Anandetal_ISIT2025_WPIR, anand2026converseboundssunjafartypeweak}, and colluding servers \cite{Anandetal_ISIT2025_WPIR, anand2026converseboundssunjafartypeweak}. Recently, the study of WPIR was extended to graph-based storage environments in \cite{Vidya_ISITA2026}, which evaluated a randomized protocol under average-case Mutual Information and Maximal Leakage metrics. In this work, we advance this framework by analyzing graph-based WPIR under the much stricter, $\epsilon$-privacy metric and establishing bounds on the achievable rate.

Privacy requirements for graph-based storage have also been relaxed in other directions very recently. In \emph{local PIR} \cite{nomeir2026local,meel2026local}, the desired index must be hidden only from the servers that store the desired file, and exact capacities are obtained for several graph classes (e.g., cycles); the framework of \cite{nomeir2026arbitrary} generalizes this to arbitrary per-server privacy requirements, interpolating between local and standard graph-based PIR. These works relax \emph{which} servers must be protected, whereas the $\epsilon$-privacy studied here bounds \emph{how much} every server may learn about $\theta$, yielding a smooth rate--privacy trade-off. The two directions are complementary.

In this work, we adapt the WPIR protocol introduced in \cite{Vidya_ISITA2026} to construct a retrieval scheme tailored for graph-based storage systems. To strictly measure and constrain the privacy leaked to the servers, we evaluate this scheme through the $\epsilon$-privacy framework for WPIR originally established in \cite{samy2019capacity,samy2021asymmetric}. The scheme itself is that of \cite{Vidya_ISITA2026}; what is new here is the pointwise $\epsilon$-privacy analysis of this scheme (in particular, its exact $\epsilon$ in terms of the bias $p$ and the upstream degrees) and the converse bound. The primary contributions of this work are as follows:
\begin{itemize}
    \item Section II contains the preliminaries, the system model and the G-WPIR principles evaluated through the lens of $\epsilon$-privacy.
    \item Section III details our proposed G-WPIR protocol, which yields adaptable tradeoffs between retrieval rate and privacy leakage. This approach serves as an $\epsilon$-private generalization of the graph-PIR protocol previously introduced in \cite[Section IV]{shanbhag2026private}, building on the scheme from \cite[Section III]{Vidya_ISITA2026}. We analyze the achievable rates of our scheme and its corresponding $\epsilon$-privacy guarantees.
    \item Section IV explores the bounds on G-WPIR subject to an $\epsilon$-privacy constraint, establishing a lower bound on the expected download cost for any graph. This yields an explicit upper bound on the achievable rate for complete graphs in terms of the number of servers (or nodes in the graph).
    \item Section V focuses on two specific graph classes: complete graphs and complete bipartite graphs. Utilizing our previous results of this work, we present their achievable rate-privacy tradeoffs together with the converse of Section IV for complete graphs.
\end{itemize}

In summary, this paper introduces the study of WPIR within graph-based storage environments under the pointwise $\epsilon$-privacy metric. This broadens the scope of existing literature that has previously examined WPIR in alternative distributed storage architectures.

 \textit{Notation:} For any positive integer $N$, let $[N]$ denote the set of integers $\{1,2,\dots,N\}$. For a collection of indexed random variables or sets $X$ and a subset of indices $\mathcal{I}\subseteq[N]$, we define $X_{\mathcal{I}} \triangleq \{X_i : i \in \mathcal{I}\}$. The symbols $\mathbf{0}$ and $\mathbf{1}$ represent the all-zero and all-one vectors, respectively, over the field $\mathbb{F}_{2}$, with lengths determined by context. The $\oplus$ symbol denotes the bit-wise logical XOR. A biased binary random variable is modeled as $X\sim \text{Bern}(p)$ where $\mathbb{P}(X=1)=1-\mathbb{P}(X=0)=p$. The special case $p=1/2$ reduces to a fair Bernoulli Trial. For a random variable $X$, $\mathbb{E}[X]$ denotes the expectation of $X$ taken with respect to its underlying probability measure. $H(X)$ and $H(X \mid Y)$ represent the Shannon entropy and conditional entropy, respectively, while $I(X;Y)$ denotes mutual information. An undirected graph $\mathcal{G}$ is defined as a pair $\mathcal{G}=(\mathcal{V},\mathcal{E})$, where $\mathcal{V} = [N]$ denotes the set of vertices (servers) and $\mathcal{E}$ denotes the set of edges (files), with the total number of files $K = |\mathcal{E}|$. An independent set of $\mathcal{G}$ is a subset of vertices in which no two vertices are adjacent, i.e., no edge in $\mathcal{E}$ connects any pair of vertices in the subset. The independence number, denoted by $\alpha(\mathcal{G})$, is defined as the cardinality of a largest independent set of $\mathcal{G}$.

\section{SYSTEM MODEL AND PRELIMINARIES}
\label{sec:sysmodel}

\subsection{Storage Model}
We consider a storage system consisting of $N$ servers, where each server is represented by a vertex of a simple, undirected graph $\mathcal{G}=(\mathcal{V},\mathcal{E})$. The vertex set is denoted by $\mathcal{V}=[N]\triangleq{\{1,2,\ldots,N\}}$ where each vertex corresponds to a distinct server in the system. The edge set $\mathcal{E}$ characterizes the file storage structure of the system, where each edge $\{i,j\}\in\mathcal{E}$ corresponds to a distinct file $W_{i,j}$ replicated across servers $i$ and $j$.
Accordingly, server $n\in\mathcal{V}$ stores all files associated with the edges incident to vertex $n$, namely:
\begin{equation*}
    \{W_{n,m}: \{n,m\}\in\mathcal{E}\}.
\end{equation*}
For convenience, these files are denoted by $W_n^1,W_n^2,\dots,W_n^{d_n}$, where $d_n=\left| \{m: \{n,m\}\in\mathcal{E}\} \right|$ is the degree of vertex $n$ in $\mathcal{G}$. Since each edge corresponds to one distinct file, the total number of files stored in the system is $K=|\mathcal{E}|$, and we also refer to the $K$ file indices as $[K]$.  The files are assumed to be mutually independent and uniformly distributed. Each file, $W_{i,j}$, has entropy $H(W_{i,j})=L\text{ bits}$, where $L\geq 1$ denotes the subpacketization level. Throughout this work, we focus on the unit-subpacketization setting for achievable schemes, in which each file consists of a single symbol, i.e., $L=1$. The converses hold for any $L$. 

\subsection{PIR Protocol}
A client wishes to privately retrieve a target file $W_{\theta}$, where the file index $\theta$ is selected uniformly at random from the edge set $\mathcal{E}$. The client communicates with the servers through a query-response protocol. Specifically, the client sends a query $Q_n$ to each server $n\in\mathcal{V}$ and receives a corresponding response $A_n$. The comprehensive set of all query realizations generated across the network is denoted by $\mathcal{Q} \triangleq \{Q_1, Q_2, \ldots, Q_N\}$. In a general G-WPIR scheme, the individual queries $Q_n:n\in[N]$ are generated as functions of the desired file index $\theta$ and private client-side randomness, and are statistically independent of the stored files. The response $A_n$ from server $n$ is a deterministic function of $Q_n$ and the files it contains.

For a PIR protocol to be correct, the client must be able to recover the desired file $W_\theta$ from the collection of queries and responses. This correctness requirement is expressed as
\begin{equation}
    H(W_{\theta}\mid \mathcal{Q},A_{[N]})=0, \label{eq:correctness}
\end{equation}
where $\mathcal{Q}$ and $A_{[N]}$ denote the collections of all queries and responses across the $N$ servers, respectively.

For a variable-download PIR protocol, the retrieval rate is defined as
\begin{equation}
    R\triangleq\frac{L}{\mathbb{E}[D]}, \label{eq:rate}
\end{equation}
where $L$ denotes the subpacketization level and $\mathbb{E}[D]$ represents the expected total download cost, given by:
\begin{equation}
    \mathbb{E}[D]=\sum_{n\in\mathcal{V}} H(A_n\mid Q_n). \label{eq:expecteddownload}
\end{equation}

The capacity of the graph-based PIR system, denoted by $C(\mathcal{G})$, is defined as the supremum of the retrieval rate over all correct PIR protocols (including non-linear protocols), i.e., $C(\mathcal{G})\triangleq\sup \{R\}$, where the supremum is taken over all PIR protocols satisfying the correctness condition.

\subsection{Strict and Weak Privacy}
The strict privacy requirement in PIR ensures that no individual server can obtain any information about the desired file index $\theta$ from the query-response pair it receives. i.e, 
\begin{equation}
    I(\theta ; Q_n, A_n) = 0 \quad \forall \, n \in \mathcal{V}. \label{eq:perfectprivacy}
\end{equation}
This requirement can also be represented as follows, letting $W_{\Omega}$ represent any arbitrary subset of the $K$ messages:
\begin{multline*}
    \left(A_n^{(k_1)}, Q_n^{(k_1)} \mid W_{\Omega}\right) \sim \left(A_n^{(k_2)}, Q_n^{(k_2)} \mid W_{\Omega}\right), \\ 
    \forall k_1,k_2 \in [K].
\end{multline*}

In this work, we consider a weak privacy requirement characterized by the privacy parameter $\epsilon$, which is non-negative. This metric was adopted by \cite{toledo2016epsilonldp} and was subsequently used in \cite{samy2019capacity}, referred to as \textit{$\epsilon$-privacy}.

To satisfy the $\epsilon$-privacy constraint, the ratio of the joint probability of queries and answers for any two different desired message indices must be bounded. Formally, for $\epsilon\geq 0$, a PIR protocol satisfies the $\epsilon$-privacy constraint if:
\begin{equation}
    \frac{\mathbb{P}\{A_n^{(\theta)} = \gamma, Q_n^{(\theta)} = \pi \mid  W_\Omega, \theta = k_1\}}{\mathbb{P}\{A_n^{(\theta)} = \gamma, Q_n^{(\theta)} = \pi \mid  W_\Omega, \theta = k_2\}}
    \le e^\epsilon \, \label{eq:epsprivacycomplete}
\end{equation}
for all servers $n \in [N]$ and for all distinct file indices $k_1, k_2 \in \{1:K\}$ where $k_1 \neq k_2$. Here, $\pi$ and $\gamma$ denote any possible realizations of the queries and answers, respectively. Since the response from any server $n$ is entirely determined by its received query and stored messages, the conditional probability of the answer realization remains independent of the target file index $\theta$. This gives us the following equation:
\begin{equation}
\begin{split}
    &\mathbb{P}\{A_n^{(\theta)} = \gamma \mid Q_n^{(\theta)} = \pi, W_\Omega, \theta = k_1\} \\
    &\qquad = \mathbb{P}\{A_n^{(\theta)} = \gamma \mid Q_n^{(\theta)} = \pi, W_\Omega, \theta = k_2\}. \label{eq:k1k2ansrealization}
\end{split}
\end{equation}
As a result, the privacy condition simplifies to bounding the likelihood ratio of the query probabilities:
\begin{equation}
    \frac{\mathbb{P}\{Q_n^{(\theta)} = \pi \mid W_\Omega, \theta = k_1\}}{\mathbb{P}\{Q_n^{(\theta)} = \pi \mid W_\Omega, \theta = k_2\}} \le e^\epsilon \, \label{eq:epsprivacy}
\end{equation}
or,
\begin{equation}
    \frac{\mathbb{P}\{Q_n^{(\theta)} = \pi \mid \theta = k_1\}}{\mathbb{P}\{Q_n^{(\theta)} = \pi \mid \theta = k_2\}} \le e^\epsilon. \, \label{eq:epsprivacyshort}
\end{equation}

\begin{remark}
    Note that $\epsilon=0$ is the perfect-privacy case. We note that an $\epsilon$-privacy guarantee implies corresponding bounds on both maximal leakage (i.e., $\mathrm{MaxL} \le \epsilon$) and mutual information leakage.
\end{remark}

\section{AN $\epsilon$-PRIVATE RETRIEVAL SCHEME FOR GENERAL GRAPHS}
\label{sec:epsprivscheme}

In this section, we present an independent-set-based $\epsilon$-private scheme for general graphs. This approach utilizes the linear PIR protocol architecture originally developed in \cite{shanbhag2026private} and subsequently adapted for average-leakage WPIR metrics in \cite{Vidya_ISITA2026}. While \cite{Vidya_ISITA2026} established the expected download cost as a function of a query bias $p$, our primary contribution here is the formulation and proof of the pointwise $\epsilon$-privacy guarantees for this scheme, mapping the bias parameter $p \in [1/2, 1]$ to a strict $\epsilon$-leakage constraint to parameterize the rate-privacy trade-off.

\subsection{Recalling the protocol from \cite{Vidya_ISITA2026}}
\label{subsec:protocolrecall}
We begin by recalling the construction of the protocol from \cite{Vidya_ISITA2026}. Let $\mathcal{G} = (\mathcal{V}, \mathcal{E})$ denote the storage graph comprising $N$ servers (vertices) and $K$ files (edges). To privately retrieve a target file $\theta \in \mathcal{E}$, the client first decomposes the vertex set into a \textit{sequential independent-set partition} $\mathcal{V} = \bigsqcup_{s=1}^{\kappa} \mathcal{I}_s$, where each $\mathcal{I}_s$ is iteratively chosen as a largest independent set of the remaining induced subgraph. For a server $n \in \mathcal{I}_s$, its degree $d_n$ splits into a downstream degree $d_n^\downarrow$  whose edges connect to servers in $\bigcup_{k=s+1}^{\kappa} \mathcal{I}_k$ and an upstream degree, denoted by $U_n = d_n^\uparrow = d_n - d_n^\downarrow$, with edges connecting to servers in $\bigcup_{k=1}^{s-1} \mathcal{I}_k$. Let $\mathcal{U}_n = \{e_1, e_2, \dots, e_{U_n}\} \subseteq \mathcal{E}$ denote the collection of these upstream edges for server $n$.

\noindent \textbf{Query Generation:} \\
For each server $n$, a binary query vector $Q_n \in \mathbb{F}_2^{d_n}$ is generated as follows:

\noindent \textit{Step 1: Queries for the initial set $\mathcal{I}_1$} \\
Servers in $\mathcal{I}_1$ have only downstream edges ($d_n^\uparrow = 0$). For each $n \in \mathcal{I}_1$, the client flips an independent biased coin $C_n$ such that $\mathbb{P}(C_n = 0) = p$ and $\mathbb{P}(C_n = 1) = 1 - p$. The query vector is given by:
\begin{equation}
Q_n =
\begin{cases}
\mathbf{1} \in \mathbb{F}_2^{d_n}, & \text{with probability } 1-p, \\
\mathbf{0} \in \mathbb{F}_2^{d_n}, & \text{with probability } p.
\end{cases} \label{eq:schemequeryvectorprob}
\end{equation}

\noindent \textit{Step $s$: Queries for subsequent sets $\mathcal{I}_s$ ($s \ge 2$)} \\
For any server $n \in \mathcal{I}_s$, the query $Q_n = (Q_n^\downarrow, Q_n^\uparrow)$ concatenates downstream and upstream subvectors:
\begin{itemize}
    \item \textit{Downstream subvector $Q_n^\downarrow$:} If downstream edges exist, a fresh coin $C_n \sim \text{Bern}(1-p)$ dictates whether $Q_n^\downarrow$ is filled with all $1$s or all $0$s. If no downstream edges exist, this subvector is dropped.
    \item \textit{Upstream subvector $Q_n^\uparrow$:} For the $j$-th upstream file $e_j$, let $X_j$ be the coin value chosen by the corresponding upstream server. The query bit is evaluated as $f_j^n = X_j \oplus \mathbb{I}[\theta = e_j]$. This simply mirrors the upstream server's query bit for non-target files, but flips it for the target file $\theta$.
\end{itemize}
The query then sent to server $n$ is represented by a binary vector
\begin{equation}
    Q_n=(f_1^n,f_2^n,\dots,f_{d_n}^n)\in\mathbb{F}_2^{d_n}, \label{eq:schemequeryvector}
\end{equation}
where $d_n$ denotes the number of files available at server $n$.

\noindent \textbf{Server Responses and Correctness:} \\
Upon receiving a non-zero query, server $n$ generates its response as the linear combination:
\begin{equation}
    A_n=\bigoplus_{j=1}^{d_n} f_j^n W_n^j, \label{eq:schemeans}
\end{equation}
where $\oplus$ denotes addition over the binary field $\mathbb{F}_2$, and $W_n^j$ represents the $j$-th file available at server $n$. A zero query, $Q_n=\mathbf{0}$, is interpreted as a null query, in which case the server does not transmit a response. The client then recovers the desired file by aggregating all received responses: $W_\theta = \bigoplus_{n \in \mathcal{V}} A_n$.

For any file $W_{m,n}$ shared by an upstream server $m$ and a downstream server $n$, the respective query bits $f_i^m$ and $f_j^n$ are perfectly matched if the index $\{m,n\} \neq \theta$. This structure allows all undesired files to cancel out in the summation of the received responses. For the desired file $W_\theta$, the query bit is logically inverted ($f_i^m = f_j^n \oplus 1$), ensuring that exactly one copy of $W_\theta$ survives the final sum $\bigoplus_n A_n$.

\subsection{$\epsilon$-Privacy Guarantees for Protocol in Subsection \ref{subsec:protocolrecall}}
The following theorem captures the rate-privacy pair achieved by the above-described scheme. The rate expression in Theorem \ref{thm:maintheoremepsprivate} follows from \cite[Theorem 1]{Vidya_ISITA2026}. The expression for $\epsilon$ is shown in Subsection \ref{subsec:proofofmaintheoremepsprivate}. 
\begin{theorem}
\label{thm:maintheoremepsprivate}
    For the sequential largest-independent-set decomposition $\mathcal{V} = \bigsqcup_{s=1}^{\kappa} \mathcal{I}_s$, the protocol in Section \ref{subsec:protocolrecall} is $\epsilon$-private with:
    \begin{equation}
    \label{eqn:epsvalue}
        \epsilon = \begin{cases}
            2 \ln\left(\frac{p}{1-p}\right), & \text{if } \max_n U_n \ge 2, \\
            \ln\left(\frac{p}{1-p}\right), & \text{if } \max_n U_n = 1.
        \end{cases}
    \end{equation}
    The protocol with parameter $p\in[1/2,1)$ achieves rate $R = 1/{\mathbb{E}[D(p)]}$, where the expected download cost is given by
    \begin{align}
        \mathbb{E}[D(p)]
        &= |\mathcal{I}_{1}|\cdot(1-p) \nonumber \\
        &\;\;+ \sum_{n\notin\mathcal I_1}
        \!\!\left[1 - \frac{p^{{U_n}+{\mathbb I(d_n^{\downarrow}>0)}}}{K}\!\left(K -
        \frac{U_n(2p-1)}{p}\right)\right].
        \label{eq:schemedownloadcost}
    \end{align}
\end{theorem}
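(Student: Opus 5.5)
The rate part is \cite[Theorem 1]{Vidya_ISITA2026} applied verbatim, so the plan is to prove only the $\epsilon$-part. By \eqref{eq:epsprivacyshort}, it suffices to bound, for every server $n$, every realization $\pi$ of $Q_n$ and every pair $k_1\neq k_2$, the ratio $\mathbb{P}\{Q_n=\pi\mid\theta=k_1\}/\mathbb{P}\{Q_n=\pi\mid\theta=k_2\}$. The first step is to pin down the marginal law of $Q_n$ under a fixed $\theta$.

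\textbf{Marginal law of $Q_n$.} For $n\in\mathcal I_1$ the query is $\mathbf{0}$ with probability $p$ and $\mathbf{1}$ with probability $1-p$, independently of $\theta$, so the ratio is $1$. For $n\in\mathcal I_s$ with $s\ge 2$:
\begin{itemize}
\item The downstream block is an all-equal vector driven by the fresh coin $C_n$, independent of $\theta$ and of everything else, so it cancels in the ratio.
\item The upstream bits are $f_j^n=X_j\oplus\mathbb I[\theta=e_j]$, where $X_j$ is the downstream coin of the upstream neighbour $m_j$.
\end{itemize}
The distinct upstream edges of $n$ go to distinct servers $m_j$, because the graph is simple. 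Each $X_j$ is a distinct coin of that server, so the $X_j$ are mutually independent, with $\mathbb P(X_j=0)=p$.

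Hence, given $\theta$, the upstream bits are independent. Each bit is $\mathrm{Bern}(1-p)$, except that the bit at $e_j=\theta$ (if $\theta\in\mathcal U_n$) is $\mathrm{Bern}(p)$. Writing $w$ for the Hamming weight of $\pi^\uparrow$, this gives
\[
\mathbb P(\pi^\uparrow\mid\theta)=p^{U_n-w}(1-p)^{w}\cdot\rho(\theta),
\]
where $\rho(\theta)=1$ if $\theta\notin\mathcal U_n$. If $\theta=e_j\in\mathcal U_n$, then $\rho(\theta)$ equals $p/(1-p)$ when $\pi_j=1$ and $(1-p)/p$ when $\pi_j=0$.

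\textbf{Worst-case ratio.} The ratio for $k_1,k_2$ is therefore $\rho(k_1)/\rho(k_2)$. With $p\ge 1/2$, the factor $\rho$ takes values in $\{(1-p)/p,\,1,\,p/(1-p)\}$. The extreme case is $\rho(k_1)=p/(1-p)$ and $\rho(k_2)=(1-p)/p$, i.e.\ $k_1=e_i$, $k_2=e_j$ with $\pi_i=1$ and $\pi_j=0$. This requires two distinct upstream edges at the same server, so $U_n\ge2$, and gives ratio $(p/(1-p))^2$, i.e.\ $\epsilon=2\ln\frac{p}{1-p}$.

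If $\max_n U_n=1$, every upstream set has at most one edge. Then one of $\rho(k_1),\rho(k_2)$ equals $1$, and the ratio is at most $p/(1-p)$. This bound is attained by taking $k_1=e_1$, $\pi_1=1$ and $k_2\notin\mathcal U_n$, which is possible since $K\ge 2$.

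Since $\epsilon$ must hold over all servers, take the maximum over $n$, which gives the case split in \eqref{eqn:epsvalue}. For tightness, the chosen $\pi$ has positive probability because $p<1$.

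\textbf{Main obstacle.} The delicate step is justifying the independence of the upstream bits and their independence from the downstream block. I would argue it directly from the construction:
\begin{itemize}
\item each upstream server $m$ uses a single coin $C_m$ for its whole downstream block;
\item distinct upstream edges of $n$ lead to distinct servers, hence to distinct coins;
\item $C_n$ is fresh.
\end{itemize}
I would also check the degenerate case where $\theta$'s edge flips $n$'s bit in a server that has no downstream coin. This cannot happen, since any upstream edge $e_j$ has, by definition, an upstream endpoint with $d^\downarrow>0$.
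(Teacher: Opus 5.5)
Your proposal is correct and follows essentially the paper's argument. You factor $\mathbb{P}\{Q_n=\pi\mid\theta\}$ into a $\theta$-independent downstream term times independent upstream bits (the paper's Observation~1), then split into cases on how many of $k_1,k_2$ lie in $\mathcal{U}_n$, which your factor $\rho(\theta)$ packages compactly. The differences are minor: you read off $\epsilon$ directly as the exact worst-case log-ratio for a given $p$ and check that it is attained, whereas the paper states a sufficient condition on $p$ (Claim~2) and uses the monotonicity of $\mathbb{E}[D(p)]$ (Claim~1) to justify taking the largest admissible $p$, a step the statement as written does not need.
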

In Corollary \ref{corr:ratelowerboundforgivenPeps} that follows, the lower bound on the rate is obtained from \cite[Corollary 1]{Vidya_ISITA2026}, while the value of $p_\epsilon$ holds from \eqref{eqn:epsvalue}. 
\begin{corollary}
\label{corr:ratelowerboundforgivenPeps}
    For any graph $\mathcal{G}$ with independence number $\alpha$, the protocol achieves an $\epsilon$-private retrieval rate bounded by:
    \begin{equation}
        R_\epsilon \ge \frac{1}{N - \alpha p_\epsilon},
    \end{equation}
    where $p_\epsilon = \frac{e^{\epsilon/2}}{1+e^{\epsilon/2}}$ when $\max_n U_n \ge 2$, and $p_\epsilon = \frac{e^{\epsilon}}{1+e^{\epsilon}}$ when $\max_n U_n = 1$.
\end{corollary}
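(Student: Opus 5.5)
The plan is to combine the two halves of Theorem~\ref{thm:maintheoremepsprivate}: first invert the map $p\mapsto\epsilon$ in \eqref{eqn:epsvalue} to find the largest admissible bias $p_\epsilon$, then bound the expected download cost \eqref{eq:schemedownloadcost} at that bias by $N-\alpha p_\epsilon$.

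\emph{Choosing the bias.} For fixed $\epsilon\ge 0$ we pick the bias $p$ so that the protocol is exactly $\epsilon$-private.

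In the case $\max_n U_n\ge 2$, we solve $2\ln\bigl(p/(1-p)\bigr)=\epsilon$. This gives $p/(1-p)=e^{\epsilon/2}$, that is, $p_\epsilon=e^{\epsilon/2}/(1+e^{\epsilon/2})$.

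In the case $\max_n U_n=1$, we solve $\ln\bigl(p/(1-p)\bigr)=\epsilon$. This gives $p_\epsilon=e^{\epsilon}/(1+e^{\epsilon})$.

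In both cases $p_\epsilon\in[1/2,1)$ for every finite $\epsilon\ge 0$, so Theorem~\ref{thm:maintheoremepsprivate} applies. We also note that $p\mapsto\ln\bigl(p/(1-p)\bigr)$ is increasing on $[1/2,1)$. Hence $p_\epsilon$ is the largest bias meeting the $\epsilon$-privacy constraint \eqref{eq:epsprivacyshort}.

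\emph{Bounding the download cost.} We use \eqref{eq:schemedownloadcost} and take $\mathcal{I}_1$ to be a largest independent set of $\mathcal{G}$, so that $|\mathcal{I}_1|=\alpha$. The first term is then exactly $\alpha(1-p)$.

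For every $n\notin\mathcal{I}_1$, the bracketed term is at most $1$. This holds because the subtracted quantity is nonnegative, which follows from three facts:
\begin{itemize}
\item $p^{U_n+\mathbb{I}(d_n^\downarrow>0)}\ge 0$;
\item $0\le (2p-1)/p\le 1$ for $p\in[1/2,1)$;
\item $U_n\le d_n\le K$, so $K-U_n(2p-1)/p\ge K-U_n\ge 0$.
\end{itemize}
Summing over the $N-\alpha$ servers outside $\mathcal{I}_1$ gives
\begin{equation*}
\mathbb{E}[D(p)]\le \alpha(1-p)+(N-\alpha)=N-\alpha p .
\end{equation*}
Setting $p=p_\epsilon$ then yields $R_\epsilon=1/\mathbb{E}[D(p_\epsilon)]\ge 1/(N-\alpha p_\epsilon)$, which is the bound claimed (it is also what \cite[Corollary 1]{Vidya_ISITA2026} states).

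\emph{Main obstacle.} The delicate step is the inversion, specifically that the case split in \eqref{eqn:epsvalue} is governed by $\max_n U_n$. This quantity depends on the decomposition, not on $p$, so the case is fixed once the partition is fixed. It remains to confirm that the decomposition used to attain $|\mathcal{I}_1|=\alpha$ is the same sequential largest-independent-set partition used in Theorem~\ref{thm:maintheoremepsprivate}. That partition starts with a largest independent set, so $|\mathcal{I}_1|=\alpha$ holds automatically and the two cases stay consistent.
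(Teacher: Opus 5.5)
Your proposal is correct and follows the paper's route: you invert \eqref{eqn:epsvalue} to obtain the admissible bias $p_\epsilon$ and combine it with the download bound $\mathbb{E}[D(p)]\le N-\alpha p$. The paper imports that bound from \cite[Corollary 1]{Vidya_ISITA2026}, whereas you verify it directly from \eqref{eq:schemedownloadcost} using $|\mathcal{I}_1|=\alpha$ and the nonnegativity of each subtracted term; your remark that $p_\epsilon$ is the \emph{largest} admissible bias is harmless but not needed, since the corollary only uses that the protocol run at bias $p_\epsilon$ is $\epsilon$-private.
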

\subsection{Proof of Theorem \ref{thm:maintheoremepsprivate}}
\label{subsec:proofofmaintheoremepsprivate}
The rate in Theorem \ref{thm:maintheoremepsprivate} is calculated by using the following expression.
\begin{align*}
    {\mathbb E}(D(p))=\sum_{n\in[N]}P(Q_n\neq 0). 
\end{align*}
Here, $P(Q_n\neq 0)$ is the probability that the query sent to the $n^{\text{th}}$ server is not the all-zero query (which results in the empty response). By carefully calculating these probabilities, the rate expression in Theorem \ref{thm:maintheoremepsprivate} is obtained (please see \cite[Section IV]{Vidya_ISITA2026} for the full calculations). 
 
To prove the privacy guarantees of the scheme, we establish the following claims.

\begin{claim}
\label{claim:expecteddownloadisnonincreasing}
    $\mathbb{E}[D(p)]$ in~\eqref{eq:schemedownloadcost} is non-increasing in $p\in[1/2,1)$. 
\end{claim}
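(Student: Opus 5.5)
The plan is to differentiate \eqref{eq:schemedownloadcost} term by term in $p$ and show the total derivative is $\le 0$ on $[1/2,1)$. The first term $|\mathcal I_1|(1-p)$ is clearly decreasing. For $n\notin\mathcal I_1$, set $U=U_n$, $b=\mathbb I(d_n^\downarrow>0)$ and $a=U+b$. The $n$-th term equals $1-g_n(p)$ with
\begin{equation*}
K g_n(p)=Kp^{a}-U(2p-1)p^{a-1}.
\end{equation*}
So it suffices to show $g_n$ is non-decreasing, except in the degenerate cases identified below.

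First we note that $U_n\ge 1$ for every $n\notin\mathcal I_1$. Since $\mathcal I_1$ is a largest independent set, it is maximal, so every other vertex has a neighbour in $\mathcal I_1$. Differentiating gives
\begin{equation*}
K g_n'(p)=p^{a-2}\,h_n(p),\qquad h_n(p)=Kap-2Up-U(a-1)(2p-1).
\end{equation*}
Because $h_n$ is affine in $p$, it suffices to check the two endpoints.

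At $p=1/2$ we have $h_n(1/2)=Ka/2-U$. This is nonnegative whenever $a\ge 2$, using $K\ge U$, and also when $a=1$ provided $K\ge 2$.

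At $p=1$ we have $h_n(1)=Ka-U(a+1)$. This is the main obstacle, since $K\ge U$ alone only gives $h_n(1)\ge -U$, so we need extra edges.
\begin{itemize}
\item If $b=1$, then $n$ has a downstream edge, so $K\ge U+1$ and $a=U+1$. Hence $Ka\ge (U+1)^2> U(U+2)$, and $h_n(1)>0$.
\item If $b=0$ and some edge of $\mathcal G$ is not incident to $n$, then $K\ge U+1=a+1$. Hence $Ka\ge (a+1)a=U(a+1)$, and $h_n(1)\ge 0$.
\end{itemize}
Under these conditions (which also cover $a=1$, since they force $K\ge 2$), $h_n\ge 0$ on $[1/2,1]$, so $g_n$ is non-decreasing and the $n$-th summand is non-increasing.

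The remaining case is $b=0$ with every edge incident to $n$. Then $\mathcal G$ is a star centred at $n$ with $K=U$ leaves. The leaves form $\mathcal I_1$ (possibly after tie-breaking when $K=1$), and $n$ is the only vertex outside $\mathcal I_1$, so the per-term argument is replaced by a direct computation of the whole sum:
\begin{equation*}
\mathbb E[D(p)]=K(1-p)+1-p^{K-1}(1-p).
\end{equation*}
Its derivative is
\begin{equation*}
-K-(K-1)p^{K-2}+Kp^{K-1}\le -K+Kp^{K-1}\le 0 .
\end{equation*}
In particular, for $K=1$ the total is identically $1$.

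Combining the cases, every summand is non-increasing except in the star case, where the whole sum is shown to be non-increasing directly. Hence $\mathbb E[D(p)]$ is non-increasing on $[1/2,1)$. The delicate point is exactly the $p=1$ endpoint, where the counting bound on $K$ must be sharpened using the structure of the graph.
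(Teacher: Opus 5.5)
Your proof is correct, but it is organised differently from the paper's.

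\textbf{Your route.} You split $\mathbb{E}[D(p)]$ by server and make each summand non-increasing on its own. The decreasing contribution of the $U_n$ files upstream of $n$ must then be offset by the $K-U_n$ remaining files inside the same term. That is why you need $K\ge U_n+1$. It is also why the centre of a star must be treated globally: there $h_n(1)=-K<0$, so that server's individual load genuinely increases near $p=1$.

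\textbf{The paper's route.} The paper writes $\mathbb{E}[D(p)]=N-\frac{1}{K}\sum_{\theta}S_\theta(p)$ with $S_\theta(p)=\sum_v P(Q_v=\mathbf{0}\mid\theta)$, so it splits by the desired file rather than by server. For fixed $\theta$, only the downstream endpoint $n$ of $\theta$ carries a negative derivative term, namely $-p^{\tau_n-1}$. This term is absorbed by a neighbour $v_0\in\mathcal{I}_1$ of $n$, which exists by maximality of $\mathcal{I}_1$. Its term $p^{\tau_{v_0}}=p$ has derivative exactly $1$, giving $dS_\theta/dp\ge 1-p^{\tau_n-1}\ge 0$ with no case analysis. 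Your star computation uses the same mechanism, with the $\mathcal{I}_1$ term $K(1-p)$ paying for the centre, but only in the one case where it is needed.

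\textbf{What each buys.} The paper's argument is shorter and uniform. It also proves that $\mathbb{E}[D\mid\theta]$ is non-increasing for every $\theta$, so monotonicity holds under any prior on $\theta$, not only the uniform one built into the closed form. However, it relies on re-deriving the closed form as $\sum_n P(Q_n\neq\mathbf{0})$ from the scheme. Your argument works purely from the displayed expression, which is literally what the claim concerns. It also yields a different refinement: every server's individual expected load is non-increasing except the centre of a star, which shows that cross-server compensation is genuinely necessary.

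One small point: like the paper, you implicitly assume no isolated vertices when you set $|\mathcal{I}_1|=K$ in the star case. Isolated vertices would only add further $(1-p)$ terms, so the conclusion is unaffected.
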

\begin{claim}
\label{claim:isbepsilonprivacy}
    To ensure $\epsilon$-privacy for the scheme in Subsection \ref{subsec:protocolrecall}, it is sufficient that the bias probability $p \in [1/2, 1)$ satisfies the inequality:
\begin{equation}
    p \le 
    \begin{cases}
    \frac{e^{\epsilon/2}}{1 + e^{\epsilon/2}}, & \text{if } \max_{n} U_n \ge 2, \\
    \frac{e^\epsilon}{1 + e^\epsilon}, & \text{if } \max_{n} U_n = 1.
    \end{cases} \label{eq:epsprivacypval}
\end{equation}
\end{claim}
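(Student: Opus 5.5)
The plan is to bound the likelihood ratio in \eqref{eq:epsprivacyshort} server by server. First I will write down exactly how $Q_n$ is distributed given $\theta=k$, and then maximize the ratio over query realizations $\pi$ and pairs $k_1,k_2$.

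\emph{Servers in $\mathcal{I}_1$.} For $n\in\mathcal{I}_1$ the query depends only on the coin $C_n$, and $C_n$ is independent of $\theta$. The ratio is therefore $1$, and these servers add no leakage.

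\emph{Servers in $\mathcal{I}_s$, $s\ge 2$.} Write $Q_n=(Q_n^\downarrow,Q_n^\uparrow)$. The downstream part is an all-equal vector driven by a fresh coin $C_n$, independent of everything else. It cancels in the ratio, so only $Q_n^\uparrow$ matters.

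The bits of $Q_n^\uparrow$ are $f_j^n=X_j\oplus\mathbb{I}[\theta=e_j]$ for $j\in[U_n]$. Here $X_j$ is the coin of the upstream endpoint $m_j$ of $e_j$. These coins are independent $\mathrm{Bern}(1-p)$ variables and independent of $\theta$. The one step that needs care is distinct upstream edges reaching the same upstream server $m$: since the graph is simple, $n$ has at most one edge to $m$, so the $U_n$ upstream endpoints are distinct servers and the $X_j$ are mutually independent.

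With that in hand the distribution is a product form. If $\theta\notin\mathcal{U}_n$, then $Q_n^\uparrow$ is i.i.d.\ with $\mathbb{P}(f_j=0)=p$. If $\theta=e_{j^*}$, the same holds except that bit $j^*$ is flipped, so $\mathbb{P}(f_{j^*}=0)=1-p$.

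For a realization $\pi$ with $z$ zeros among the upstream bits, the probability equals $p^{z}(1-p)^{U_n-z}$ multiplied by a factor of $\tfrac{1-p}{p}$ or $\tfrac{p}{1-p}$ when $\theta$ hits a bit that is $0$ or $1$ in $\pi$, respectively. The ratio between two hypotheses $k_1,k_2$ is a product of at most two such factors, one from each hypothesis, and each factor is at most $\tfrac{p}{1-p}$ because $p\ge 1/2$. This gives the following cases.
\begin{itemize}
\item If one of $k_1,k_2$ lies outside $\mathcal{U}_n$, the ratio is at most $\tfrac{p}{1-p}$.
\item If both lie in $\mathcal{U}_n$ (which requires $U_n\ge2$), the ratio is $\left(\tfrac{p}{1-p}\right)^{2}$. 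It is attained by taking $\pi_{k_1}=1$ and $\pi_{k_2}=0$.
\end{itemize}
So the worst case over all servers is $\left(\tfrac{p}{1-p}\right)^2$ when $\max_n U_n\ge2$, and $\tfrac{p}{1-p}$ when $\max_n U_n=1$.

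\emph{Converting to a condition on $p$.} Requiring the worst-case ratio to be at most $e^\epsilon$ gives $\tfrac{p}{1-p}\le e^{\epsilon/2}$ in the first case and $\tfrac{p}{1-p}\le e^{\epsilon}$ in the second. Since $p\mapsto p/(1-p)$ is increasing on $[1/2,1)$, these are equivalent to the thresholds in \eqref{eq:epsprivacypval}, which proves sufficiency.

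The main obstacle is confirming the independence structure behind the product form. Specifically, $C_{m}$ is used consistently for all downstream edges of $m$, while different upstream bits of a single server $n$ come from different coins. The argument also relies on the paper's convention that, in \eqref{eq:epsprivacyshort}, the query realization seen by server $n$ consists of its bits only.
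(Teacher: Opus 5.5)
Your proposal is correct and follows essentially the same route as the paper: factor $Q_n$ into a $\theta$-independent downstream part and mutually independent upstream bits, cancel every factor not involving $k_1,k_2$, and bound the ratio by $p/(1-p)$ when at most one of $k_1,k_2$ is an upstream edge of $n$ and by $\left(p/(1-p)\right)^2$ when both are. Your explicit treatment of servers in $\mathcal{I}_1$ and your simple-graph argument that the upstream coins come from distinct servers only spell out what the paper's Observation states in passing.
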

By Claims \ref{claim:expecteddownloadisnonincreasing} and \ref{claim:isbepsilonprivacy}, it follows that we must choose the largest $p$ that satisfies $\epsilon$-privacy, that is, the largest $p$ in \eqref{eq:epsprivacypval}, for getting the largest rate. The value of $\epsilon$ in \eqref{eqn:epsvalue} follows, and then Theorem \ref{thm:maintheoremepsprivate} is proved. The proofs for the two claims now follow.

\textit{\underline{Proof of Claim \ref{claim:expecteddownloadisnonincreasing}:}}
Assuming $\mathcal{G}$ has no isolated vertices, we can express the total download cost as $\mathbb{E}[D(p)] = N - \frac{1}{K}\sum_{\theta\in\mathcal{E}}S_\theta(p)$, where $S_\theta(p) \triangleq \sum_{v\in\mathcal{V}}P(Q_v=\mathbf{0}\mid\theta)$. We first argue that for any fixed target file $\theta$, $S_\theta(p)$ is non-decreasing in $p$ and then average this result over all $\theta$. 

Fix $\theta=\{m,n\}$ with endpoints in sets $m \in \mathcal{I}_t$ and $n \in \mathcal{I}_s$ ($t < s$). Let $\tau_v \triangleq \mathbb{I}(d_v^{\downarrow}>0)+U_v$ represent the effective number of unbiased constraints for server $v$. For all $v \neq n$, the query generation is independent of $\theta$, giving $P(Q_v=\mathbf{0}\mid\theta) = p^{\tau_v}$. For server $n$, exactly one upstream bit corresponding to $\theta$ is logically inverted, evaluating to zero with probability $(1-p)$, which gives $P(Q_n=\mathbf{0}\mid\theta) = p^{\tau_n-1}(1-p)$.

Differentiating $S_\theta(p)$ with respect to $p$ yields:
\begin{align}
    \frac{dS_\theta}{dp} &= \frac{d}{dp} \Bigg( \sum_{v\neq n} p^{\tau_v} + p^{\tau_n-1}(1-p) \Bigg) \nonumber \\
    &= \sum_{v\neq n}\tau_v\,p^{\tau_v-1} + (\tau_n-1)\,p^{\tau_n-2}(1-p) - p^{\tau_n-1}. \label{eqn:dsbydtheta}
\end{align}
Since $p \in [1/2, 1)$ and $\tau_n \ge 1$, all terms in \eqref{eqn:dsbydtheta} are non-negative except for $-p^{\tau_n-1}$. Now because the initial set $\mathcal{I}_1$ is a maximal independent set and $n \notin \mathcal{I}_1$, $n$ is guaranteed to have at least one neighbor $v_0 \in \mathcal{I}_1$. This neighbor possesses no upstream edges ($U_{v_0} = 0$) and at least one downstream edge ($d_{v_0}^\downarrow \ge 1$), yielding exactly $\tau_{v_0} = 1$. 

Thus, $v_0$'s contribution to the summation in \eqref{eqn:dsbydtheta} evaluates exactly to $\tau_{v_0}p^{\tau_{v_0}-1} = 1$. Dropping the remaining non-negative terms isolates the lower bound:
\begin{equation}
    \frac{dS_\theta}{dp} \ge 1 - p^{\tau_n-1} \ge 0.
\end{equation}
Thus $S_\theta(p)$ is non-decreasing in $p$ for every $\theta$. Averaging over $\theta$ shows that $\mathbb{E}[D(p)]$ is non-increasing in $p$, and hence the claim is proved.

\textit{\underline{Proof of Claim \ref{claim:isbepsilonprivacy}:}}
    For a server $n \in \mathcal{I}_s$ (where $s \ge 2$), the upstream bit $f_j^n$ corresponding to the shared file $W_{n,m}$ (denoted by index $e_j$) is constructed as $f_j^n = X_j \oplus \mathbb{I}(\theta = e_j)$. Here, $\mathbb{I}(\cdot)$ is the indicator function that outputs 1 if the condition is true and 0 otherwise.

We note the following useful observation. 
\begin{observation}
\label{OBS:queryprivacystructure}
For server $n$, let the query $Q_n = (Q_n^\uparrow,Q_n^\downarrow)$ be represented by its upstream and downstream subvectors respectively. WLOG assume that $q^{\uparrow} = (f_{1}^n,....,f_{U_n}^n)$ and $q^{\downarrow} = (f_{U_n+1}^n,....,f_{d_n}^n)$. The downstream subvector is determined entirely by a single coin $C_n$, meaning it is independent of $\theta$ and the upstream bits. The upstream subvector $Q_n^\uparrow = q^{\uparrow}$ has $U_n$ bits which are mutually independent, as each upstream bit is a function of a different coin originating at a different distinct server in a prior inependent set. We factor the conditional probability for any query realization as follows:
\begin{equation*}
    \mathbb{P}\{Q_n = q \mid \theta = k\} = \mathbb{P}\{Q_n^{\downarrow} = q^{\downarrow}\} \prod_{j=1}^{U_n} \mathbb{P}\{Q^{\uparrow}_{n,j} = f^{n}_j \mid \theta = k\}.
\end{equation*}
Now, considering the $\epsilon$-privacy constraint in \eqref{eq:epsprivacyshort}, we write
\begin{align}
    &\frac{\mathbb{P}\{Q_n = q \mid \theta = k_1\}}{\mathbb{P}\{Q_n = q \mid \theta = k_2\}} \nonumber \\
    &\quad = \frac{\mathbb{P}\{Q_n^{\downarrow} = q^{\downarrow}\} \prod_{j=1}^{U_n} \mathbb{P}\{Q^{\uparrow}_{n,j} = f^{n}_j \mid \theta = k_1\}}{\mathbb{P}\{Q_n^{\downarrow} = q^{\downarrow}\} \prod_{j=1}^{U_n} \mathbb{P}\{Q^{\uparrow}_{n,j} = f^{n}_j \mid \theta = k_2\}} \nonumber \\
    &\quad = \frac{\prod_{j=1}^{U_n} \mathbb{P}\{Q^{\uparrow}_{n,j} = f^{n}_j \mid \theta = k_1\}}{\prod_{j=1}^{U_n} \mathbb{P}\{Q^{\uparrow}_{n,j} = f^{n}_j \mid \theta = k_2\}}. \label{eqn:queryproductstructure}
\end{align}
We thus note that, in the ratio in \eqref{eqn:queryproductstructure}, those terms in the numerator and the denominator which are not dependent on the choices $\theta=k_1$ and $\theta=k_2$ cancel out each other. Specifically, this cancellation happens for all downstream bits of $Q_n$. Further, such cancellation also happens for each upstream bit that is related to an upstream edge which is neither $k_1$ nor $k_2$.

\end{observation}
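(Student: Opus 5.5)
The plan is to condition on $\theta=k$ throughout and identify exactly which independent randomness each coordinate of $Q_n$ depends on. Once $\theta$ is fixed, the only randomness in the protocol of Subsection~\ref{subsec:protocolrecall} is the family of coins $\{C_v\}_{v\in\mathcal{V}}$. These are drawn independently across servers, and each $C_v$ is drawn independently of $\theta$. The factorization and the ratio identity \eqref{eqn:queryproductstructure} both follow once every block of $Q_n$ is shown to be a function of a distinct coin.

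\emph{Step 1 (the downstream block).} If $d_n^\downarrow>0$, then $Q_n^\downarrow$ equals $C_n\cdot\mathbf{1}$, so it is a deterministic function of $C_n$ alone. Its law is $\mathbb{P}\{Q_n^\downarrow=\mathbf{1}\}=1-p$ and $\mathbb{P}\{Q_n^\downarrow=\mathbf{0}\}=p$, which does not involve $k$. If $d_n^\downarrow=0$, the block is absent and we set its probability factor to $1$.

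\emph{Step 2 (the upstream bits).} Write $e_j=\{m_j,n\}$ with $m_j\in\mathcal{I}_{t_j}$ and $t_j<s$. The bit $X_j$ is the query bit that the upstream server $m_j$ places on file $e_j$. Since $n$ lies in a later independent set than $m_j$, the edge $e_j$ is a downstream edge of $m_j$. This is the point I expect to need the most care, because one must check that $X_j$ does not itself carry a $\theta$-dependent flip.
\begin{itemize}
\item If $t_j=1$, the whole query of $m_j$ is $C_{m_j}\cdot\mathbf{1}$.
\item If $t_j\ge 2$, the bit lies in $Q_{m_j}^\downarrow$, which is also $C_{m_j}\cdot\mathbf{1}$.
\end{itemize}
In both cases $X_j=C_{m_j}$, with no $\theta$-dependent flip; flips occur only in upstream subvectors. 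Hence $f_j^n=C_{m_j}\oplus\mathbb{I}[k=e_j]$.

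\emph{Step 3 (independence and factorization).} Because $\mathcal{G}$ is simple, the endpoints $m_1,\dots,m_{U_n}$ are pairwise distinct, and all of them differ from $n$. Therefore $Q_n^\downarrow,f_1^n,\dots,f_{U_n}^n$ are functions of the distinct, mutually independent coins $C_n,C_{m_1},\dots,C_{m_{U_n}}$, shifted by constants that are fixed once $k$ is fixed. This gives mutual independence and the product formula for $\mathbb{P}\{Q_n=q\mid\theta=k\}$. From it we can read off the marginals:
\begin{equation*}
\mathbb{P}\{Q^{\uparrow}_{n,j}=b\mid\theta=k\}=\begin{cases} p, & b\oplus\mathbb{I}[k=e_j]=0,\\ 1-p, & \text{otherwise.}\end{cases}
\end{equation*}

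\emph{Step 4 (cancellation in the ratio).} Substituting the product into the left side of \eqref{eqn:queryproductstructure}, the common factor $\mathbb{P}\{Q_n^\downarrow=q^\downarrow\}$ cancels, since by Step 1 it is independent of $k$. For any $j$ with $e_j\notin\{k_1,k_2\}$, we have $\mathbb{I}[k_1=e_j]=\mathbb{I}[k_2=e_j]=0$, so by the displayed marginal the $j$-th factors in numerator and denominator are equal and cancel. Only the factors for the at most two upstream edges belonging to $\{k_1,k_2\}$ survive, which is the content of the observation.
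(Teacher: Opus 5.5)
Your proposal is correct and takes the same approach as the paper. You condition on $\theta=k$, observe that $Q_n^\downarrow$ depends only on $C_n$ and that each upstream bit depends on a distinct upstream coin, then factor the probability and cancel the factors that do not depend on $k$. You also supply details the paper leaves implicit: that $X_j=C_{m_j}$ carries no $\theta$-dependent flip (because $e_j$ lies in the downstream block of $m_j$), and that $m_1,\dots,m_{U_n}$ are distinct because $\mathcal{G}$ is simple.
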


Clearly, by Observation \ref{OBS:queryprivacystructure}, we see that the ratio in \eqref{eqn:queryproductstructure} is $1$, if none of the upstream edges at server $n$ corresponds to the files $k_1$ or $k_2$. In this case, the privacy condition given by \eqref{eqn:queryproductstructure} is satisfied with $\epsilon=0$.  

Now, we consider the other two cases, i.e., when at least one of the upstream edges at server $n$ is $k_1$ or $k_2$. 
\begin{itemize}
    \item \textit{Case 1: Exactly one of the  files, say file $k_1$, corresponds to an upstream edge $e_j$ for server $n$.} \\
    In this scenario, the query probability differs in exactly one bit, $f_j^n$. Assume we observe $f_j^n = 1$: \\
    If $\theta=k_1$ (i.e., desired file corresponds to the upstream edge $e_j$) then,
    \begin{align*}
        \mathbb{P}\{f_j^n = 1 \mid \theta = k_1\} &= \mathbb{P}\{X_j \oplus 1 = 1 \mid \theta = k_1\} \\
        &= \mathbb{P}\{X_j = 0\} = p.
    \end{align*}
    If the desired file is $k_2$ (not an upstream edge for server $n$) then,
    \begin{align*}
        \mathbb{P}\{f_j^n = 1 \mid \theta = k_2\} &= \mathbb{P}\{X_j \oplus 0 = 1 \mid \theta = k_2\} \\
        &= \mathbb{P}\{X_j = 1\} = 1 - p.
    \end{align*}
    The resulting ratio of the conditional query probabilities, by Observation \ref{OBS:queryprivacystructure} and \eqref{eqn:queryproductstructure}, is thus obtained as 
    $$\frac{\prod_{j=1}^{U_n} \mathbb{P}\{Q^{\uparrow}_{n,j} = f^{n}_j \mid \theta = k_1\}}{\prod_{j=1}^{U_n} \mathbb{P}\{Q^{\uparrow}_{n,j} = f^{n}_j \mid \theta = k_2\}}=\frac{p}{1-p}.$$
    When the observed bit is $f_j^n = 0$, we can compute the same ratio as $\frac{1-p}{p}$, which is bounded by $\frac{p}{1-p}$ since $p \ge 1/2$.

    \item \textit{Case 2: Both $k_1$ and $k_2$ are distinct upstream edges of server $n$.} \\
    This case is only possible if server $n$ has $U_n \ge 2$ upstream edges. Here, the query probabilities differ across two bits simultaneously, say $f_j^n$ and $f_{j'}^n$, driven by independent upstream coins $X_j$ and $X_{j'}$. Assume we observe a query realization where $f_j^n = 1$ and $f_{j'}^n = 0$: \\
    If the desired file is $k_1$ (corresponding to $e_j$) then,
    \begin{align*}
        &\mathbb{P}\{f_j^n = 1, f_{j'}^n = 0 \mid \theta = k_1\} \\
        &\quad = \mathbb{P}\{X_j \oplus 1 = 1, X_{j'} \oplus 0 = 0 \mid \theta = k_1\} \\
        &\quad = \mathbb{P}\{X_j = 0\} \mathbb{P}\{X_{j'} = 0\} = p^2.
    \end{align*}
    If the desired file is $k_2$ (corresponding to $e_{j'}$) then,
    \begin{align*}
        &\mathbb{P}\{f_j^n = 1, f_{j'}^n = 0 \mid \theta = k_2\} \\
        &\quad = \mathbb{P}\{X_j \oplus 0 = 1, X_{j'} \oplus 1 = 0 \mid \theta = k_2\} \\
        &\quad = \mathbb{P}\{X_j = 1\} \mathbb{P}\{X_{j'} = 1\} = (1 - p)^2.
    \end{align*}
    The resulting probability ratio, again by Observation \ref{OBS:queryprivacystructure} and \eqref{eqn:queryproductstructure}, is obtained as
\begin{align*}
    &\frac{\prod_{j=1}^{U_n} \mathbb{P}\{Q^{\uparrow}_{n,j} = f^{n}_j \mid \theta = k_1\}}{\prod_{j=1}^{U_n} \mathbb{P}\{Q^{\uparrow}_{n,j} = f^{n}_j \mid \theta = k_2\}}\\
    &\quad=\frac{p^2}{(1-p)^2} = \left(\frac{p}{1-p}\right)^2.
 \end{align*}
\end{itemize}
Now, for any other realization of $(f_1^n,\hdots,f_{d_n}^n)$ as per the protocol (i.e., for $(f_j^n,f_{j'}^n)=(0,0),(1,1)$ and $(0,1)$, we get the ratios of the conditional query probabilities in \eqref{eqn:queryproductstructure} as $p(1-p)/((1-p)p)$ or $((1-p)/p)^2$, both of which are upper bounded by $(p/(1-p))^2$ as $p\geq 1/2$. Therefore, for Case 2, it is sufficient to choose $p$ so as to bound the ratio $(p/(1-p))^2$ by $e^\epsilon.$

Thus, we have the following conditions for the choice $p$ in the protocol, to meet the $\epsilon$-privacy constraint. 

If $\max_{n} U_n = 1$, Case 2 cannot occur. We require:
\begin{equation*}
    \frac{p}{1-p} \le e^\epsilon \implies p \le \frac{e^\epsilon}{1 + e^\epsilon}.
\end{equation*}

If $\max_{n} U_n \ge 2$, Case 2 defines the maximum leakage. We thus require:
\begin{equation*}
    \left(\frac{p}{1-p}\right)^2 \le e^\epsilon \implies \frac{p}{1-p} \le e^{\epsilon/2} \implies p \le \frac{e^{\epsilon/2}}{1 + e^{\epsilon/2}}.
\end{equation*}
This concludes the proof of Claim \ref{claim:isbepsilonprivacy} and hence of Theorem \ref{thm:maintheoremepsprivate}.

\section{Converse Bound for PIR Rate under $\epsilon$-privacy}
In this section, we establish theoretical bounds on the achievable retrieval rate of G-WPIR systems under the $\epsilon$-privacy constraint. Specifically, we adapt the fundamental entropy inequalities originally developed for perfectly private graph-based PIR by Sadeh et al. \cite{Sadehetal_TIT_2023_BoundsonPIRgraphs} to accommodate $\epsilon$-information leakage. We first establish foundational inequalities, which we then apply to derive a lower bound on the total download cost for arbitrary graphs, followed by a specialized rate bound for complete graphs.

\subsection{Essential Inequalities under $\epsilon$-Privacy}

In this subsection, we prove a sequence of essential lemmas that establish a lower bound on the total download, and subsequently the rate, in any $\epsilon$-private protocol. Lemma \ref{lemma:epsilon_privacy_k1k2_entropy} bounds the ratio of conditional entropies across different target file indices, while Lemma \ref{lemma:epsilon_privacy_theta_entropy} bounds the overall conditional entropy of the answer given the query and a message subset. We later combine these arguments to prove the lower bound on the total download cost in any $\epsilon$-private protocol in Theorem \ref{thm:rate_general_graphs}.

\begin{lemma}
\label{lemma:epsilon_privacy_k1k2_entropy}
    For any server $n$, subset of messages $W_\Omega \subseteq W_{[K]}$, and any two distinct file indices $k_1, k_2 \in [K]$, the conditional entropy of the answer -- given all the query realizations and the message subset and the desired file index -- satisfies the following upper and lower bounds:
\begin{align}
    &e^{-\epsilon} H(A_n \mid \mathcal{Q}, W_\Omega, \theta = k_2) \le H(A_n \mid \mathcal{Q}, W_\Omega, \theta = k_1) \nonumber\\
    &\qquad\qquad\qquad\le e^{\epsilon} H(A_n \mid \mathcal{Q}, W_\Omega, \theta = k_2). \label{eq:k1k2entropybound}
\end{align}
\end{lemma}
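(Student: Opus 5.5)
Following Samy et al.~\cite{samy2019capacity}, the plan is to write the conditional entropy as an average over query realizations and bound the averaging weights, rather than the entropy values, by $e^{\pm\epsilon}$. First I fix the interpretation: $H(A_n\mid\mathcal{Q},W_\Omega,\theta=k)$ is read with the conditioning on $\mathcal{Q}$ replaced by conditioning on $Q_n$. Given $Q_n$ and the stored messages, $A_n$ is a deterministic function of $Q_n$ and the files at server $n$, and these files are independent of $(\mathcal{Q},\theta)$. Hence the remaining queries $Q_{[N]\setminus\{n\}}$ give no extra information about $A_n$ once $Q_n$ and $W_\Omega$ are fixed, and $H(A_n\mid \mathcal{Q},W_\Omega,\theta=k)=H(A_n\mid Q_n,W_\Omega,\theta=k)$. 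This reduction is the main point to justify: it is where the fact that the files are independent of the queries is used.

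Next I expand the right-hand side as
\begin{equation*}
H(A_n\mid Q_n,W_\Omega,\theta=k)=\sum_{\pi}\mathbb{P}\{Q_n=\pi\mid\theta=k\}\,h_\pi ,
\end{equation*}
where $h_\pi\triangleq H(A_n\mid Q_n=\pi,W_\Omega,\theta=k)$. By \eqref{eq:k1k2ansrealization}, together with the independence of the files from $\theta$ and the queries, $h_\pi$ is the same for $k=k_1$ and $k=k_2$. Concretely, $h_\pi$ is the entropy of a fixed function of $\pi$ and the files, conditioned on $W_\Omega$, under the uniform file distribution. The one subtlety is that the conditioning on $W_\Omega$ must be averaged over $W_\Omega$ in the same way under both hypotheses. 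This holds because $W_\Omega$ is independent of $(Q_n,\theta)$, so the joint weights factor as $\mathbb{P}\{Q_n=\pi\mid\theta=k\}\,\mathbb{P}\{W_\Omega=w\}$.

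Finally I apply \eqref{eq:epsprivacyshort} termwise: $\mathbb{P}\{Q_n=\pi\mid\theta=k_1\}\le e^{\epsilon}\,\mathbb{P}\{Q_n=\pi\mid\theta=k_2\}$. Since each $h_\pi\ge 0$, multiplying by $h_\pi$ and summing over $\pi$ gives the upper bound in \eqref{eq:k1k2entropybound}. Exchanging the roles of $k_1$ and $k_2$, which is allowed because the $\epsilon$-privacy condition holds for every ordered pair of distinct indices, gives $H(A_n\mid\cdot,\theta=k_2)\le e^{\epsilon}H(A_n\mid\cdot,\theta=k_1)$, which is the lower bound. A query realization with zero probability under one index also has zero probability under the other, since the ratio is bounded in both directions, so no division issues arise.
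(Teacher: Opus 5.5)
Your proposal is correct and follows the paper's proof essentially step for step. You reduce the conditioning on $\mathcal{Q}$ to $Q_n$ using that $A_n$ depends only on $Q_n$ and the files at server $n$, which are independent of the queries and $\theta$; you expand over query realizations with $\theta$-independent per-realization entropies, with the $W_\Omega$-weights factoring by independence; you bound the weights termwise by $e^{\pm\epsilon}$; and you get the other direction by swapping $k_1$ and $k_2$. This is exactly the paper's argument, and your remark on zero-probability realizations is a detail the paper leaves implicit.
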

\begin{IEEEproof}
    First, note that the answer $A_n$ is a deterministic function of the query $Q_n$ and the files stored at server $n$. Since the queries $\mathcal{Q} \setminus Q_{n}$ are independent of the stored files given $\theta$, we can reduce the conditioning on the joint queries $\mathcal{Q}$ to the marginal query $Q_n$:
\begin{equation}
    H(A_n \mid \mathcal{Q}, W_\Omega, \theta = k) = H(A_n \mid Q_n, W_\Omega, \theta = k). \label{eq:lem:k1k2entropy_0}
\end{equation}
    We expand the conditional entropy $H(A_n\mid Q_n, W_\Omega, \theta = k_1)$ over the individual query realization $q_n$ and the message subset $\underline{w}$ to establish the lower bound:
\begin{align}
    &H(A_n \mid Q_n, W_\Omega, \theta = k_1) \nonumber \\
    &\quad = \sum_{q_n, \underline{w}} P_{Q_n, W_\Omega \mid \theta}(q_n, \underline{w} \mid k_1) \nonumber \\
    &\qquad \times H(A_n \mid Q_n = q_n, W_\Omega = \underline{w}, \theta = k_1) \nonumber \\
    &\quad = \sum_{q_n, \underline{w}} P_{Q_n \mid \theta}(q_n \mid k_1) P_{W_\Omega}(\underline{w}) \nonumber \\
    &\qquad \times H(A_n \mid Q_n = q_n, W_\Omega = \underline{w}, \theta = k_1) \label{eq:lem:k1k2entropy_1} \\
    &\quad \ge e^{-\epsilon} \sum_{q_n, \underline{w}} P_{Q_n \mid \theta}(q_n \mid k_2) P_{W_\Omega}(\underline{w}) \nonumber \\
    &\qquad \times H(A_n \mid Q_n = q_n, W_\Omega = \underline{w}, \theta = k_2) \label{eq:lem:k1k2entropy_2} \\
    &\quad = e^{-\epsilon} H(A_n \mid Q_n, W_\Omega, \theta = k_2), \nonumber
\end{align}

where \eqref{eq:lem:k1k2entropy_1} follows because the query $Q_n$ and the stored messages $W_\Omega$ are generated independently, and the stored messages are independent of the target file index $\theta$; and \eqref{eq:lem:k1k2entropy_2} is obtained by substituting the $\epsilon$-privacy definition's lower bound for the marginal query, $P_{Q_n \mid \theta}(q_n \mid k_1) \ge e^{-\epsilon} P_{Q_n \mid \theta}(q_n \mid k_2)$, into \eqref{eq:lem:k1k2entropy_1} alongside the property that $H(A_n \mid Q_n = q_n, W_\Omega = \underline{w}, \theta = k_1) = H(A_n \mid Q_n = q_n, W_\Omega = \underline{w}, \theta = k_2)$ because the entropy of the answer does not change with the desired file index once the specific query and message contents are fixed. Substituting \eqref{eq:lem:k1k2entropy_0} back into both sides yields the stated lower bound.

The corresponding upper bound $H(A_n \mid \mathcal{Q}, W_\Omega, \theta = k_1) \le e^{\epsilon} H(A_n \mid \mathcal{Q}, W_\Omega, \theta = k_2)$ is obtained symmetrically by substituting the upper bound of the marginal $\epsilon$-privacy constraint, $P_{Q_n \mid \theta}(q_n \mid k_1) \le e^{\epsilon} P_{Q_n \mid \theta}(q_n \mid k_2)$, into \eqref{eq:lem:k1k2entropy_1} and following identical algebraic steps.\end{IEEEproof}

\begin{lemma}
\label{lemma:epsilon_privacy_theta_entropy}
    For any server $n$, the overall conditional entropy of the answer given the query and a message subset satisfies the following upper and lower bounds:
{\small \begin{align}
    &\left( \frac{K}{1 + (K - 1)e^{\epsilon}} \right) H(A_n \mid \mathcal{Q}, W_\Omega) \le H(A_n \mid \mathcal{Q}, W_\Omega, \theta = k_1) \nonumber\\
    &\qquad\qquad\le \left( \frac{K}{1 + (K - 1)e^{-\epsilon}} \right) H(A_n \mid \mathcal{Q}, W_\Omega).
    \label{eqn:boundonanswerentropygivenk1}
\end{align}
}
\end{lemma}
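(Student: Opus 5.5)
The plan is to write the unconditioned entropy $H(A_n \mid \mathcal{Q}, W_\Omega)$ as an average over the uniformly distributed target index and then apply Lemma~\ref{lemma:epsilon_privacy_k1k2_entropy} term by term. Since $\theta$ is uniform on $[K]$, I first show that
\begin{equation*}
H(A_n \mid \mathcal{Q}, W_\Omega) \ge H(A_n \mid \mathcal{Q}, W_\Omega, \theta) = \frac{1}{K}\sum_{k=1}^{K} H(A_n \mid \mathcal{Q}, W_\Omega, \theta = k)
\end{equation*}
actually holds with equality. This is the step that needs justification, because conditioning on $\theta$ can in general reduce entropy. The argument is as follows. $A_n$ is a deterministic function of $Q_n$ and the files at server $n$. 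The files are independent of $(\theta,\mathcal{Q})$. Hence, given $(\mathcal{Q}, W_\Omega)$, the distribution of $A_n$ is determined by $Q_n$ and $W_\Omega$ alone, through the conditional law of the remaining stored files. That law does not depend on $\theta$. Therefore $A_n - (\mathcal{Q}, W_\Omega) - \theta$ is a Markov chain, so $I(A_n;\theta\mid\mathcal{Q},W_\Omega)=0$ and the equality follows. This Markov-chain step is the main obstacle; everything afterwards is algebra. In the same spirit as \eqref{eq:lem:k1k2entropy_0}, I would also reduce the conditioning on $\mathcal{Q}$ to $Q_n$, which makes the independence of the files from $(\theta,\mathcal{Q})$ transparent.

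Next, fix $k_1$ and write $h_k \triangleq H(A_n \mid \mathcal{Q}, W_\Omega, \theta = k)$, so that $H(A_n\mid\mathcal{Q},W_\Omega) = \frac{1}{K}\big(h_{k_1} + \sum_{k\neq k_1} h_k\big)$. By Lemma~\ref{lemma:epsilon_privacy_k1k2_entropy} with the roles of the indices swapped, each $k\neq k_1$ satisfies $e^{-\epsilon} h_{k_1} \le h_k \le e^{\epsilon} h_{k_1}$. Summing over the $K-1$ indices $k\neq k_1$ gives
\begin{equation*}
\frac{1+(K-1)e^{-\epsilon}}{K}\, h_{k_1} \le H(A_n\mid\mathcal{Q},W_\Omega) \le \frac{1+(K-1)e^{\epsilon}}{K}\, h_{k_1}.
\end{equation*}

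Finally, rearranging these two inequalities gives the result. The right-hand inequality yields $h_{k_1} \ge \frac{K}{1+(K-1)e^{\epsilon}} H(A_n\mid\mathcal{Q},W_\Omega)$. The left-hand inequality yields $h_{k_1} \le \frac{K}{1+(K-1)e^{-\epsilon}} H(A_n\mid\mathcal{Q},W_\Omega)$. Together these are exactly \eqref{eqn:boundonanswerentropygivenk1}.
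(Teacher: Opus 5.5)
Your proposal is correct and follows the paper's proof. Both write $H(A_n \mid \mathcal{Q}, W_\Omega)$ as the uniform average $\frac{1}{K}\sum_{k} H(A_n \mid \mathcal{Q}, W_\Omega, \theta=k)$, using that $A_n$ is conditionally independent of $\theta$ given $(\mathcal{Q}, W_\Omega)$, then apply Lemma~\ref{lemma:epsilon_privacy_k1k2_entropy} to each of the $K-1$ terms with $k \neq k_1$ and rearrange; the paper only phrases this through the difference $H(A_n \mid \mathcal{Q}, W_\Omega) - H(A_n \mid \mathcal{Q}, W_\Omega, \theta=k_1)$, and your Markov-chain argument spells out the averaging step that the paper asserts in one line.
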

\begin{IEEEproof} We first establish the upper bound. Consider the following difference between the entropies:
\begin{flalign}
    &H(A_n \mid \mathcal{Q}, W_\Omega) - H(A_n \mid \mathcal{Q}, W_\Omega, \theta = k_1) \nonumber \\
    &\qquad = H(A_n \mid \mathcal{Q}, W_\Omega, \theta) - H(A_n \mid \mathcal{Q}, W_\Omega, \theta = k_1) \label{eq:lem:thetaentropy_1} \\
    &\qquad = \sum_{k} P_\theta(k) H(A_n \mid \mathcal{Q}, W_\Omega, \theta = k) \nonumber \\
    &\qquad \qquad \qquad - \sum_{k} P_\theta(k) H(A_n \mid \mathcal{Q}, W_\Omega, \theta = k_1) \nonumber \\
    &\qquad = \sum_{k \neq k_1} P_\theta(k) \Big[ H(A_n \mid \mathcal{Q}, W_\Omega, \theta = k) \nonumber \\
    &\qquad \qquad \qquad - H(A_n \mid \mathcal{Q}, W_\Omega, \theta = k_1) \Big] \nonumber \\
    &\qquad \ge \sum_{k \neq k_1} P_\theta(k) \Big[ e^{-\epsilon} H(A_n \mid \mathcal{Q}, W_\Omega, \theta = k_1) \nonumber \\
    &\qquad \qquad \qquad - H(A_n \mid \mathcal{Q}, W_\Omega, \theta = k_1) \Big] \label{eq:lem:thetaentropy_2} \\
    &\qquad = \left( 1 - \frac{1}{K} \right) (e^{-\epsilon} - 1) H(A_n \mid \mathcal{Q}, W_\Omega, \theta = k_1), \label{eq:lem:thetaentropy_3} 
\end{flalign}
 where \eqref{eq:lem:thetaentropy_1} follows because $A_n$ is independent of $\theta$ given the queries and $W_\Omega$ i.e., $H(A_n \mid \mathcal{Q}, W_\Omega) = H(A_n \mid \mathcal{Q}, W_\Omega, \theta)$; \eqref{eq:lem:thetaentropy_2} follows from Lemma \ref{lemma:epsilon_privacy_k1k2_entropy}; \eqref{eq:lem:thetaentropy_3} evaluates the summation by assuming a uniform prior distribution for the desired file, $P_\theta(k) = \frac{1}{K}$, over the remaining $K-1$ terms. Rearranging the terms gives us the upper bound in \eqref{eqn:boundonanswerentropygivenk1}.
 We obtain the corresponding lower bound as follows:
 \begin{align}
    &H(A_n \mid \mathcal{Q}, W_\Omega) - H(A_n \mid \mathcal{Q}, W_\Omega, \theta = k_1) \nonumber  \\
     &\qquad = \sum_{k \neq k_1} P_\theta(k) \Big[ H(A_n \mid \mathcal{Q}, W_\Omega, \theta = k) \nonumber \\
    &\qquad \qquad \qquad - H(A_n \mid \mathcal{Q}, W_\Omega, \theta = k_1) \Big] \label{eq:lem:thetaentropy_4} \\
    &\qquad \le \sum_{k \neq k_1} P_\theta(k) \Big[ e^{\epsilon} H(A_n \mid \mathcal{Q}, W_\Omega, \theta = k_1) \nonumber \\
    &\qquad \qquad \qquad - H(A_n \mid \mathcal{Q}, W_\Omega, \theta = k_1) \Big] \label{eq:lem:thetaentropy_5} \\
    &\qquad = \left( 1 - \frac{1}{K} \right) (e^{\epsilon} - 1) H(A_n \mid \mathcal{Q}, W_\Omega, \theta = k_1), \label{eq:lem:thetaentropy_6}
 \end{align}
 where \eqref{eq:lem:thetaentropy_4} is obtained by following the same steps as the proof for the upper bound; \eqref{eq:lem:thetaentropy_5} follows from the upper bound established in Lemma \ref{lemma:epsilon_privacy_k1k2_entropy}. Now, rearranging the terms in \eqref{eq:lem:thetaentropy_6} gives us lower bound in \eqref{eqn:boundonanswerentropygivenk1}.
\end{IEEEproof}

 Next, we bound the entropy of $W_{i,j}$ by the sum of the conditional entropies of the answers given the queries and the remaining files of servers $i$ and $j$. The property shown in the following lemma is used to establish the rate upper bound for complete graphs.
\begin{lemma}
\label{lemma:subpack_entropy_bound}
    Let $\mathcal{W}$ denote the set of all files. Let $i,j \in [N]$ be two distinct servers that share the file $W_{i,j} \in \mathcal{W}$, then
\begin{equation}
\begin{split}
    L &\le \frac{K}{1+(K-1)e^{-\epsilon}} \Big[ H(A_{i} \mid \mathcal{W} \setminus \{W_{i,j}\}, \mathcal{Q}) \\
    &\quad + H(A_{j} \mid \mathcal{W} \setminus \{W_{i,j}\}, \mathcal{Q}) \Big].
\end{split}
\end{equation}
\end{lemma}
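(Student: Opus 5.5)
The plan is to fix the target index $\theta=\{i,j\}$, where the file $W_{i,j}$ is the desired file, and use correctness together with the storage structure. Once every file except $W_{i,j}$ is known, every answer $A_n$ with $n\notin\{i,j\}$ is a deterministic function of $Q_n$ and files other than $W_{i,j}$, because server $n$ does not store $W_{i,j}$. So, conditionally on $\mathcal{W}\setminus\{W_{i,j}\}$ and $\mathcal{Q}$, only $A_i$ and $A_j$ can carry information about $W_{i,j}$. We then transfer the bound from the conditional-on-$\theta=\{i,j\}$ world to the unconditional one via Lemma \ref{lemma:epsilon_privacy_theta_entropy}.

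The chain of steps is as follows. Because the files are independent, and independent of $\mathcal{Q}$ and $\theta$, we have $L=H(W_{i,j})=H(W_{i,j}\mid \mathcal{W}\setminus\{W_{i,j}\},\mathcal{Q},\theta=\{i,j\})$. By the correctness condition \eqref{eq:correctness} applied conditionally on $\theta=\{i,j\}$, this equals
\[
I\bigl(W_{i,j};A_{[N]}\mid \mathcal{W}\setminus\{W_{i,j}\},\mathcal{Q},\theta=\{i,j\}\bigr).
\]
This mutual information is at most $H(A_{[N]}\mid \mathcal{W}\setminus\{W_{i,j}\},\mathcal{Q},\theta=\{i,j\})$. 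The answers $A_n$ for $n\neq i,j$ are determined by the conditioning, so they drop out. Subadditivity then gives
\[
L\le H(A_i\mid \mathcal{W}\setminus\{W_{i,j}\},\mathcal{Q},\theta=\{i,j\})+H(A_j\mid \mathcal{W}\setminus\{W_{i,j}\},\mathcal{Q},\theta=\{i,j\}).
\]

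Finally, we apply the upper bound of Lemma \ref{lemma:epsilon_privacy_theta_entropy} with $W_\Omega=\mathcal{W}\setminus\{W_{i,j}\}$ and $k_1=\{i,j\}$, once for $n=i$ and once for $n=j$. Each conditional entropy given $\theta=\{i,j\}$ is at most $\frac{K}{1+(K-1)e^{-\epsilon}}$ times the corresponding entropy without conditioning on $\theta$. Summing the two bounds yields the claim.

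The main subtlety is justifying that correctness holds conditionally on the specific event $\theta=\{i,j\}$. This is needed because \eqref{eq:correctness} is stated averaged over $\theta$. It follows because a conditional entropy that is zero on average, with $\theta$ uniform and hence every value having positive probability, must vanish for each realization; we may also include $\theta$ in the conditioning since the scheme must decode for every index. A second point to check is that Lemma \ref{lemma:epsilon_privacy_theta_entropy} allows $W_\Omega$ to be any subset, including one of size $K-1$; its proof uses only the independence of the files from $(Q_n,\theta)$, so it applies here.
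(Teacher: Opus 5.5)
Your proposal is correct and follows essentially the same argument as the paper. Both condition on $\theta=\{i,j\}$ and use file independence plus correctness to write $L$ as a mutual information with the answers. Both discard $A_n$ for $n\notin\{i,j\}$ as deterministic given $\mathcal{W}\setminus\{W_{i,j}\}$ and $\mathcal{Q}$, apply subadditivity, and finish with the upper bound of Lemma~\ref{lemma:epsilon_privacy_theta_entropy} at $W_\Omega=\mathcal{W}\setminus\{W_{i,j}\}$. The only differences are cosmetic: the paper drops the other answers before writing the mutual information as $H(A_i,A_j\mid\cdot)$, while you first bound it by $H(A_{[N]}\mid\cdot)$. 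Your justification that correctness holds conditionally on the event $\theta=\{i,j\}$ is a step the paper leaves implicit.
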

\begin{IEEEproof}
    The proof proceeds in a similar manner to that of \cite[Lemma 3]{Sadehetal_TIT_2023_BoundsonPIRgraphs}. Let $k$ be the index corresponding to the file $W_{i,j}$.
\begin{align}
    L &= H(W_{i,j}) \nonumber \\
    &= H(W_{i,j} \mid \mathcal{W} \setminus \{W_{i,j}\}, \mathcal{Q}, \theta=k) \label{eq:lem:subpackentropy_1} \\
    &= H(W_{i,j} \mid \mathcal{W} \setminus \{W_{i,j}\}, \mathcal{Q}, \theta=k) \nonumber \\
    &\qquad- H(W_{i,j} \mid \mathcal{W} \setminus \{W_{i,j}\}, \mathcal{Q}, \theta=k, A_{[N]}) \label{eq:lem:subpackentropy_2} \\
    &= H(W_{i,j} \mid \mathcal{W} \setminus \{W_{i,j}\}, \mathcal{Q}, \theta=k) \nonumber \\
    &\qquad- H(W_{i,j} \mid \mathcal{W} \setminus \{W_{i,j}\}, \mathcal{Q}, \theta=k, A_{i}, A_{j}) \label{eq:lem:subpackentropy_3} \\
    &= I(W_{i,j}; A_{i}, A_{j} \mid \mathcal{W} \setminus \{W_{i,j}\}, \mathcal{Q}, \theta=k) \nonumber \\
    &= H(A_{i}, A_{j} \mid \mathcal{W} \setminus \{W_{i,j}\}, \mathcal{Q}, \theta=k) \nonumber \\
    &\le H(A_{i} \mid \mathcal{W} \setminus \{W_{i,j}\}, \mathcal{Q}, \theta=k) \nonumber \\
    &\qquad+ H(A_{j} \mid \mathcal{W} \setminus \{W_{i,j}\}, \mathcal{Q}, \theta=k) \nonumber \\
    &\le \frac{K}{1+(K-1)e^{-\epsilon}} \Big[ H(A_{i} \mid \mathcal{W} \setminus \{W_{i,j}\}, \mathcal{Q}) \nonumber \\
    &\qquad + H(A_{j} \mid \mathcal{W} \setminus \{W_{i,j}\}, \mathcal{Q}) \Big], \label{eq:lem:subpackentropy_4}
\end{align}
where \eqref{eq:lem:subpackentropy_1} follows since $W_{i,j}$ is independent of the remaining files, the queries and the requested file $\theta = k$; \eqref{eq:lem:subpackentropy_2} follows from the correctness condition; \eqref{eq:lem:subpackentropy_3} follows since answers $A_{[N]} \setminus \{A_i, A_j\}$ are a deterministic function of $\mathcal{Q}$ and $\mathcal{W} \setminus \{W_{i,j}\}$; and \eqref{eq:lem:subpackentropy_4} follows from the upper bound of Lemma \ref{lemma:epsilon_privacy_theta_entropy} by substituting $W_\Omega = \mathcal{W} \setminus \{W_{i,j}\}$.
\end{IEEEproof}

The following theorem establishes an essential bound on the total download cost for any $\epsilon$-private protocol over arbitrary graphs.
\begin{theorem}
    For any $\epsilon$-private PIR scheme over an arbitrary graph G, the following inequality holds:
\label{thm:rate_general_graphs}
\begin{equation}
\begin{split}
    \sum_{j=1}^{N}H(A_{j}|\mathcal{Q}) &\ge \frac{1+(K-1)e^{-\epsilon}}{K} \Bigg[ L + \frac{K}{1+(K-1)e^{\epsilon}} \\
    &\times \Big( H(A_{1}|\mathcal{Q},W_{k}) + \sum_{j=2}^{N}H(A_{j}|W_{[j-1]},\mathcal{Q}) \Big) \Bigg],
\end{split}
\end{equation}
where $W_k$ is a file stored at server 1 and $W_{[j-1]}$ denotes the set of all files stored at servers 1 through $j-1$.
\end{theorem}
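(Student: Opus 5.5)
Fix a file $W_k$ stored at server $1$, shared with some server $m$, and let $\theta=k$. The idea is to follow the chain-rule argument of Sadeh et al.\ for perfect privacy, with the $\epsilon$-privacy conversion factors of Lemma \ref{lemma:epsilon_privacy_theta_entropy} inserted wherever conditioning on $\theta=k$ is added or removed.

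First I would prove the inequality with $\theta=k$ fixed:
\[
\sum_{j=1}^N H(A_j\mid\mathcal{Q},\theta=k)\ \ge\ H(A_{[N]}\mid\mathcal{Q},\theta=k).
\]
By correctness, $H(A_{[N]}\mid\mathcal{Q},\theta=k)=H(W_k,A_{[N]}\mid\mathcal{Q},\theta=k)$. Expanding with $W_k$ first gives $L+H(A_{[N]}\mid W_k,\mathcal{Q},\theta=k)$. The second term is then expanded by the chain rule over $A_1,A_2,\dots,A_N$:
\[
H(A_{[N]}\mid W_k,\mathcal{Q},\theta=k)=H(A_1\mid W_k,\mathcal{Q},\theta=k)+\sum_{j\ge2}H(A_j\mid A_{[j-1]},W_k,\mathcal{Q},\theta=k).
\]
In the $j$-th summand I would add conditioning on $W_{[j-1]}$, the files stored at servers $1,\dots,j-1$. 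This can only lower the entropy, and since $A_{[j-1]}$ is a function of $W_{[j-1]}$ and $\mathcal{Q}$, the summand is at least $H(A_j\mid W_{[j-1]},\mathcal{Q},\theta=k)$. Here $W_k\in W_{[1]}\subseteq W_{[j-1]}$, so $W_k$ is absorbed into the conditioning.

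Next I would remove $\theta=k$ from both sides.
\begin{itemize}
\item On the left, the upper bound of Lemma \ref{lemma:epsilon_privacy_theta_entropy} with $W_\Omega=\emptyset$ gives $H(A_j\mid\mathcal{Q},\theta=k)\le \frac{K}{1+(K-1)e^{-\epsilon}}H(A_j\mid\mathcal{Q})$.
\item On the right, the lower bound of the same lemma, with $W_\Omega=\{W_k\}$ and with $W_\Omega=W_{[j-1]}$, gives $H(A_j\mid W_{[j-1]},\mathcal{Q},\theta=k)\ge\frac{K}{1+(K-1)e^{\epsilon}}H(A_j\mid W_{[j-1]},\mathcal{Q})$, and similarly for the $A_1$ term.
\end{itemize}
Combining the two and multiplying through by $\frac{1+(K-1)e^{-\epsilon}}{K}$ yields exactly the stated bound.

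The main obstacle is the $\theta$-conditioning step. The lemmas are stated with conditioning on the full query set $\mathcal{Q}$ and an arbitrary file subset $W_\Omega$. I need to check that Lemma \ref{lemma:epsilon_privacy_k1k2_entropy}'s reduction from $\mathcal{Q}$ to $Q_n$ stays valid when $W_\Omega$ contains the desired file $W_k$ itself. It should, because queries are independent of all files and $A_n$ depends only on $Q_n$ and stored files. I also need the $L$ term to stay unscaled, which is why I would extract $W_k$ before converting the remaining terms. If that reduction turns out to be delicate, the fallback is to apply the pointwise query likelihood ratio directly, on $Q_n$ alone, inside each conditional entropy.
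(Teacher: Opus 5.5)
Your proposal is correct and follows the paper's proof step for step. You extract the unscaled $L$ via correctness and the independence of $W_k$ from $(\mathcal{Q},\theta)$, apply subadditivity and the chain rule under $\theta=k$, add $W_{[j-1]}$ to the conditioning so that $A_{[j-1]}$ and $W_k$ are absorbed, and then use the upper and lower bounds of Lemma~\ref{lemma:epsilon_privacy_theta_entropy} to remove $\theta=k$ from the left and right sides. The only difference is cosmetic: the paper converts the left-hand side before expanding with the chain rule, while you convert both sides at the end. Your concern about reducing $\mathcal{Q}$ to $Q_n$ when $W_k\in W_\Omega$ resolves exactly as you suggest, because the files are jointly independent of $(\mathcal{Q},\theta)$.
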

\begin{IEEEproof} We begin the proof by following a similar structure to \cite[Lemma 2.5]{ge2025privateinformationretrievalgraphs}. Let $W_{\{i\}}$ indicate the set of files stored in server $i$. Then, $W_k \in W_{\{1\}} \subseteq W_{[j-1]}$, for all $j \ge 2$. Now,
\begin{align}
    L&= H(W_{k}) = H(W_{k}|\mathcal{Q},\theta=k) \label{eq:thm:rategeneral_1} & \\
    &= H(W_{k}|\mathcal{Q},\theta=k) - H(W_{k}|A_{[N]},\mathcal{Q},\theta=k) \label{eq:thm:rategeneral_2} & \\
    &= I(W_{k};A_{[N]}|\mathcal{Q},\theta=k) \nonumber & \\
    &= H(A_{[N]}|\mathcal{Q},\theta=k) - H(A_{[N]}|\mathcal{Q},W_{k},\theta=k) \nonumber & \\
    &\le \sum_{j=1}^{N}H(A_{j}|\mathcal{Q},\theta=k) - H(A_{[N]}|\mathcal{Q},W_{k},\theta=k) \nonumber & \\
    &\le \frac{K}{1+(K-1)e^{-\epsilon}}\sum_{j=1}^{N}H(A_{j}|\mathcal{Q}) \nonumber & \\
    &\qquad \qquad- H(A_{[N]}|\mathcal{Q},W_{k},\theta=k). \label{eq:thm:rategeneral_3}
\end{align}
Rearranging the terms gives us:
\begin{align}
   \sum_{j=1}^{N}&H(A_{j}|\mathcal{Q})\\
   &\ge \frac{1+(K-1)e^{-\epsilon}}{K}\Big[L+H(A_{[N]}|\mathcal{Q},W_{k},\theta=k)\Big] \nonumber & \\
    &= \frac{1+(K-1)e^{-\epsilon}}{K} \nonumber & \\
    &\qquad \qquad\times \Big[L+\sum_{j=1}^{N}H(A_{j}|A_{[j-1]},\mathcal{Q},W_{k},\theta=k)\Big] \label{eq:thm:rategeneral_4} & \\
    &= \frac{1+(K-1)e^{-\epsilon}}{K}\Big[L+H(A_{1}|\mathcal{Q},W_{k},\theta=k) \nonumber & \\
    &\hspace{15pt} +\sum_{j=2}^{N}H(A_{j}|A_{[j-1]},\mathcal{Q},W_{k},\theta=k)\Big] \nonumber & \\
    &\ge \frac{1+(K-1)e^{-\epsilon}}{K}\Big[L+H(A_{1}|\mathcal{Q},W_{k},\theta=k) \nonumber & \\
    &\hspace{15pt} +\sum_{j=2}^{N}H(A_{j}|A_{[j-1]},W_{[j-1]},\mathcal{Q},W_{k},\theta=k)\Big] \label{eq:thm:rategeneral_5} & \\
    &= \frac{1+(K-1)e^{-\epsilon}}{K}\Big[L+H(A_{1}|\mathcal{Q},W_{k},\theta=k) \nonumber & \\
    &\hspace{15pt} +\sum_{j=2}^{N}H(A_{j}|W_{[j-1]},\mathcal{Q},\theta=k)\Big] \label{eq:thm:rategeneral_6} & \\
    &\ge \frac{1+(K-1)e^{-\epsilon}}{K}\Bigg[L + \frac{K}{1+(K-1)e^{\epsilon}} \nonumber \\
    &\quad \times \Big( H(A_{1}|\mathcal{Q},W_{k})+\sum_{j=2}^{N}H(A_{j}|W_{[j-1]},\mathcal{Q}) \Big) \Bigg], \label{eq:thm:rategeneral_7} &
\end{align}

where \eqref{eq:thm:rategeneral_1} follows since the queries $\mathcal{Q}$ and the identity of the desired file are independent of the specific file content; \eqref{eq:thm:rategeneral_2} follows from the correctness condition; \eqref{eq:thm:rategeneral_3} follows from the upper bound of Lemma \ref{lemma:epsilon_privacy_theta_entropy}; \eqref{eq:thm:rategeneral_4} follows from the chain rule of entropy; \eqref{eq:thm:rategeneral_5} follows since conditioning reduces entropy; \eqref{eq:thm:rategeneral_6} holds because $A_{[j-1]}$ can be completely determined by $W_{[j-1]}$ and $\mathcal{Q}$, and because $W_k \in W_{[j-1]}$ for any $j \ge 2$; and \eqref{eq:thm:rategeneral_7} follows from applying the lower bound of Lemma \ref{lemma:epsilon_privacy_theta_entropy} for the message subsets $W_k$ and $W_{[j-1]}$ to remove the conditioning on $\theta$. This finishes the proof.
\end{IEEEproof}

\subsection{$\epsilon$-Privacy Converse for Complete Graphs} We now focus on the specific class of complete graphs $K_N$. By leveraging the automorphism group of $K_N$, we formalize a symmetry condition in the following lemma. This lemma serves as the foundation for deriving an explicit lower bound on the rate of $\epsilon$-private schemes for complete graphs in the subsequent theorem.
\begin{lemma}
    Let $\Pi$ be an $\epsilon$-private scheme with for $K_N$ with rate $R$. Then there exists an $\epsilon$-private scheme $\Pi'$ with rate $R$ such that for all $i\neq j\in [N]$, the following conditions are true:
\label{lemma:transitive_graphs}
\begin{equation}
    H(A_i \mid \mathcal{Q}) = H(A_j \mid \mathcal{Q}), \label{eq:transitivegraphsequalentropy}
\end{equation}
and
\begin{equation}
    H(A_i \mid \mathcal{W} \setminus \{W_{i,j}\}, \mathcal{Q})\geqslant \frac{L(1+(K-1)e^{-\epsilon})}{2K}. \label{eq:completegraphsubpackentropy}
\end{equation}
\end{lemma}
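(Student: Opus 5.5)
The plan is a symmetrization argument over the automorphism group of $K_N$, namely the symmetric group $S_N$ acting on the servers. Given $\Pi$, for each permutation $\sigma\in S_N$ define the scheme $\Pi_\sigma$ in which the servers are relabeled by $\sigma$. Since every pair of servers shares exactly one file in $K_N$, this relabeling induces a bijection on the files, $W_{i,j}\mapsto W_{\sigma(i),\sigma(j)}$. To retrieve $W_\theta$ under $\Pi_\sigma$, the client runs $\Pi$ for the target $\sigma^{-1}(\theta)$ and sends the query intended for server $i$ to server $\sigma(i)$. The scheme $\Pi'$ first draws $\sigma$ uniformly from $S_N$ using private client randomness, without revealing it, and then runs $\Pi_\sigma$.

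First I check that $\Pi'$ is correct, has rate $R$ and is $\epsilon$-private. Correctness is inherited from each $\Pi_\sigma$. Because $\theta$ is uniform and $\sigma$ is a bijection, each $\Pi_\sigma$ has the same expected download $\sum_n H(A_n\mid Q_n)$, so their average $\Pi'$ does too. For privacy, server $n$ sees the mixture
\[
P'(Q_n=\pi\mid\theta=k)=\frac{1}{N!}\sum_\sigma P_\Pi\bigl(Q_{\sigma^{-1}(n)}=\pi\mid \theta=\sigma^{-1}(k)\bigr).
\]
Each term satisfies the ratio bound \eqref{eq:epsprivacyshort} when compared with the matching term for $k_2$ under the same $\sigma$. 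A sum of numerators over a sum of denominators with termwise ratio at most $e^\epsilon$ is itself at most $e^\epsilon$. Some care is needed here: the query alphabet must be identified across servers, and one may append $\sigma$-independent labels so that answers remain well defined as functions of the stored files.

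Next I prove \eqref{eq:transitivegraphsequalentropy}. Let $\sigma$ be part of the client randomness, hidden from servers but included in the conditioning variable $\mathcal{Q}$ (or made recoverable from the full query tuple). Then
\[
H(A_i\mid\mathcal{Q})=\frac{1}{N!}\sum_\sigma H_\Pi\bigl(A_{\sigma^{-1}(i)}\mid \mathcal{Q}\bigr)=\frac{1}{N}\sum_m H_\Pi(A_m\mid\mathcal{Q}),
\]
which does not depend on $i$. The same averaging shows that $H(A_i\mid \mathcal{W}\setminus\{W_{i,j}\},\mathcal{Q})$ is invariant under the transitive action on ordered pairs $(i,j)$, and by swapping $i$ and $j$ it takes the same value for $A_i$ and $A_j$.

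For \eqref{eq:completegraphsubpackentropy}, apply Lemma \ref{lemma:subpack_entropy_bound} to $\Pi'$ with the pair $(i,j)$. This gives
\[
L\le \frac{K}{1+(K-1)e^{-\epsilon}}\,2\,H(A_i\mid \mathcal{W}\setminus\{W_{i,j}\},\mathcal{Q}),
\]
and rearranging yields the claim.

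I expect the main obstacle to be the symmetrization bookkeeping. One part is making sure that conditional entropies of the mixed scheme really equal averages of the relabeled ones, which needs $\sigma$ to be determined by $\mathcal{Q}$, or at least that conditioning on $\sigma$ is harmless for these terms. The other part is confirming that the mixture over $\sigma$ preserves the pointwise $\epsilon$-ratio bound. The remaining steps are routine.
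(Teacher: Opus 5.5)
Your proposal is correct and follows essentially the same route as the paper: symmetrize over $\text{Aut}(K_N)=S_N$, show that the averaged single-answer entropies are independent of the server (or of the ordered pair $(i,j)$), then apply Lemma~\ref{lemma:subpack_entropy_bound} to the symmetric pair and rearrange. The only difference is that the paper reveals $\sigma$ to the servers, so $\sigma$ is part of $\mathcal{Q}$ and privacy is checked per $\sigma$, which removes the bookkeeping you flag. In your hidden-$\sigma$ version, the mediant argument also works, and conditioning on $\sigma$ is harmless because $A_i$ is a function of $Q_i$ and the files, and the files are independent of $(\mathcal{Q},\sigma)$.
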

\begin{IEEEproof}
    For an arbitrary PIR scheme $\Pi$ over $K_{N}$ achieving rate $R$, and an automorphism $\sigma \in \text{Aut}(K_{N})$ mapping edge $e$ (file index $\theta$) to $\sigma(e)$ (file index $\sigma(\theta)$), we construct a relabeled PIR scheme $\Pi_{\sigma}$. For each server $i$, the query (with its coordinates permuted according to the edge mapping induced by $\sigma$) and response are given by:
\begin{align}
    \mathcal{Q}_{i}^{\Pi_{\sigma}} &:= \mathcal{Q}_{\sigma(i)}^{\Pi}(\sigma(\theta)) .\label{eq:lem:automorphquery}\\
    A_{i}^{\Pi_{\sigma}} &:= A_{\sigma(i)}^{\Pi}. \label{eq:lem:automorphans}
\end{align}
Since $\sigma(K_{N})$ is a vertex permutation, $\Pi_{\sigma}$ maintains the same rate and correctness as $\Pi$. To satisfy structural symmetry conditions, we define a new scheme $\Pi'$ that uniformly samples and executes one of the schemes $\Pi_{\sigma}$ for $\sigma \in \text{Aut}(K_{N})$.
\begin{itemize}
    \item The user selects a desired file $\theta\in E(K_{N})$ and independently samples an automorphism $\sigma\in \text{Aut}(K_{N})$ uniformly at random, which is revealed to all servers.
    \item The user generates the queries $\mathcal{Q}^{\Pi'} := \mathcal{Q}^{\Pi_{\sigma}}$ and sends them to the corresponding servers.
    \item Each server $i$ responds with the answer $A_{i}^{\Pi'} := A_{i}^{\Pi_{\sigma}}$, which is received by the client.
\end{itemize}

Now, for any two distinct servers $i, j \in [N]$, there exists a automorphism $\tau \in \text{Aut}(K_{N})$ that swaps $i$ and $j$ while fixing all other vertices. We have:
\begin{align}
    &H(A_{i}^{\Pi'} \mid \mathcal{Q}^{\Pi'}) \nonumber \\
    &\quad= \frac{1}{|\text{Aut}(K_{N})|}\sum_{\sigma\in \text{Aut}(K_{N})} H(A_{i}^{\Pi_{\sigma}} \mid \mathcal{Q}^{\Pi_{\sigma}}) \label{eq:lem:transitivegraphs_1} \\
    &\quad= \frac{1}{|\text{Aut}(K_{N})|}\sum_{\sigma\in \text{Aut}(K_{N})} H(A_{\sigma(i)}^{\Pi} \mid \mathcal{Q}^{\Pi}) \nonumber \\
    &\quad= \frac{1}{|\text{Aut}(K_{N})|}\sum_{\sigma\in \text{Aut}(K_{N})} H(A_{\sigma(\tau(i))}^{\Pi} \mid \mathcal{Q}^{\Pi}) \label{eq:lem:transitivegraphs_2} \\
    &\quad= \frac{1}{|\text{Aut}(K_{N})|}\sum_{\sigma\in \text{Aut}(K_{N})} H(A_{\sigma(j)}^{\Pi} \mid \mathcal{Q}^{\Pi}) \label{eq:lem:transitivegraphs_3} \\
    &\quad= \frac{1}{|\text{Aut}(K_{N})|}\sum_{\sigma\in \text{Aut}(K_{N})} H(A_{j}^{\Pi_{\sigma}} \mid \mathcal{Q}^{\Pi_{\sigma}}) \nonumber \\
    &\quad= H(A_{j}^{\Pi'} \mid \mathcal{Q}^{\Pi'}). \nonumber
\end{align}
Here, \eqref{eq:lem:transitivegraphs_1} follows from choosing a uniform random $\sigma$ revealed to the servers; \eqref{eq:lem:transitivegraphs_2} applies the change of variables $\sigma\mapsto\sigma\circ\tau$; and \eqref{eq:lem:transitivegraphs_3} uses $\tau(i)=j$. This gives us \eqref{eq:transitivegraphsequalentropy}.

Similarly, we explicitly establish the symmetric entropy conditioned on $\mathcal{W} \setminus \{W_{i,j}\}$. Notice that the automorphism $\tau$ maps the shared edge $W_{i,j}$ to itself, meaning $\tau(\mathcal{W} \setminus \{W_{i,j}\}) = \mathcal{W} \setminus \{W_{i,j}\}$. Thus:
\begin{align}
    &H(A_{i}^{\Pi'} \mid \mathcal{W} \setminus \{W_{i,j}\},\mathcal{Q}^{\Pi'}) \nonumber \\
    &= \frac{1}{|\text{Aut}(K_{N})|}\sum_{\sigma\in \text{Aut}(K_{N})} H(A_{\sigma(i)}^{\Pi} \mid \sigma(\mathcal{W} \setminus \{W_{i,j}\}),\mathcal{Q}^{\Pi}) \nonumber \\
    &\hspace{-0.2cm}= \frac{1}{|\text{Aut}(K_{N})|}\sum_{\sigma\in \text{Aut}(K_{N})} H(A_{\sigma(\tau(i))}^{\Pi} \mid \sigma(\tau(\mathcal{W} \setminus \{W_{i,j}\})),\mathcal{Q}^{\Pi}) \nonumber \\
    &= \frac{1}{|\text{Aut}(K_{N})|}\sum_{\sigma\in \text{Aut}(K_{N})} H(A_{\sigma(j)}^{\Pi} \mid \sigma(\mathcal{W} \setminus \{W_{i,j}\}),\mathcal{Q}^{\Pi}) \nonumber \\
    &= H(A_{j}^{\Pi'} \mid \mathcal{W} \setminus \{W_{i,j}\},\mathcal{Q}^{\Pi'}).
    \label{eq:lem:transitivegraphs_4}
\end{align}
Now from Lemma \ref{lemma:subpack_entropy_bound} we have
\begin{equation}
\begin{split}
    &\frac{K}{1+(K-1)e^{-\epsilon}} \Big[ H(A_{i}^{\Pi'} \mid \mathcal{W} \setminus \{W_{i,j}\},\mathcal{Q}^{\Pi'}) \\
    &\quad + H(A_{j}^{\Pi'} \mid \mathcal{W} \setminus \{W_{i,j}\},\mathcal{Q}^{\Pi'}) \Big] \ge L. \label{eq:lem:transitivegraphs_5}
\end{split}
\end{equation}
Combining \eqref{eq:lem:transitivegraphs_4} and \eqref{eq:lem:transitivegraphs_5} we get 
\begin{equation*}
    H(A_{i}^{\Pi'} \mid \mathcal{W} \setminus \{W_{i,j}\},\mathcal{Q}^{\Pi'})\geqslant \frac{L(1+(K-1)e^{-\epsilon})}{2K}.
\end{equation*}

We also show that the new scheme is $\epsilon$-private. Because the sampled automorphism $\sigma$ is revealed to the servers, the privacy condition evaluates on a per-$\sigma$ basis. Let $\{i_1, j_1\}$ and $\{i_2, j_2\}$ represent the edges obtained by mapping the distinct target files $k_1$ and $k_2$ through the automorphism $\sigma$ (i.e., $\sigma(k_1) = \{i_1, j_1\}$ and $\sigma(k_2) = \{i_2, j_2\}$). Now,
\begin{align}
    \frac{P(\mathcal{Q}_{i}^{\Pi_{\sigma}}=q|\theta=k_1)}{P(\mathcal{Q}_{i}^{\Pi_{\sigma}}=q|\theta=k_2)} &= \frac{P(\mathcal{Q}_{\sigma(i)}^{\Pi}=q|\sigma(\theta)=\{i_{1},j_{1}\})}{P(\mathcal{Q}_{\sigma(i)}^{\Pi}=q|\sigma(\theta)=\{i_{2},j_{2}\})} \nonumber \\
    &= \frac{P(\mathcal{Q}_{\sigma(i)}^{\Pi}=q|\theta=\{\sigma^{-1}(i_{1}),\sigma^{-1}(j_{1})\})}{P(\mathcal{Q}_{\sigma(i)}^{\Pi}=q|\theta=\{\sigma^{-1}(i_{2}),\sigma^{-1}(j_{2})\})} \label{eq:lem:transitivegraphs_6} \\
    &\le e^{\epsilon}. \label{eq:lem:transitivegraphs_7}
\end{align}
\eqref{eq:lem:transitivegraphs_6} holds because $\{\sigma^{-1}(i_{1}),\sigma^{-1}(j_{1})\}$ and $\{\sigma^{-1}(i_{2}),\sigma^{-1}(j_{2})\}$ are valid edges in the complete graph $K_{N}$ and \eqref{eq:lem:transitivegraphs_7} follows from the $\epsilon$-privacy of $\Pi$.
\end{IEEEproof}

Building upon the prior lemmas, we now derive an explicit upper bound on the retrieval rate for complete graphs.
\begin{theorem}
    For a complete graph $K_N$ on $N \geq 3$ vertices and $K = N(N-1)/2$, any $\epsilon$-private scheme has a rate satisfying
\label{thm:rate_complete_graphs}
\begin{align*}
    R\leq &~\frac{2K \big(1+(K-1)e^{\epsilon}\big)}{ \big(1+(K-1)e^{-\epsilon}\big)} \\ 
    &\qquad\times\Big[ N + 1 + (K-1)\big(2e^{\epsilon} + (N-1)e^{-\epsilon}\big) \Big]^{-1}.
\end{align*}
\end{theorem}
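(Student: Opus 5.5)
The plan is to combine the general download-cost lower bound of Theorem \ref{thm:rate_general_graphs} with the per-server bound \eqref{eq:completegraphsubpackentropy} of Lemma \ref{lemma:transitive_graphs}. Since $R=L/\mathbb{E}[D]$ with $\mathbb{E}[D]=\sum_n H(A_n\mid Q_n)\ge\sum_n H(A_n\mid\mathcal{Q})$, I will upper-bound $R$ by lower-bounding $\sum_{j}H(A_j\mid\mathcal{Q})$. By Lemma \ref{lemma:transitive_graphs} we may replace the given scheme by the symmetrized $\epsilon$-private scheme $\Pi'$ with the same rate. It therefore suffices to prove the bound for $\Pi'$, where \eqref{eq:completegraphsubpackentropy} is available for every ordered pair $i\neq j$.

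For brevity write $a\triangleq\frac{1+(K-1)e^{-\epsilon}}{K}$ and $b\triangleq\frac{K}{1+(K-1)e^{\epsilon}}$. Theorem \ref{thm:rate_general_graphs} then reads
\[
\sum_j H(A_j\mid\mathcal{Q})\ge a\Big[L+b\Big(H(A_1\mid\mathcal{Q},W_k)+\sum_{j=2}^{N}H(A_j\mid W_{[j-1]},\mathcal{Q})\Big)\Big],
\]
and \eqref{eq:completegraphsubpackentropy} says $H(A_i\mid\mathcal{W}\setminus\{W_{i,j}\},\mathcal{Q})\ge La/2$. The key step is to bound each bracketed term below by such a quantity, using that conditioning on more files only lowers entropy.

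Take $W_k=W_{1,2}$. Then $\{W_{1,2}\}\subseteq\mathcal{W}\setminus\{W_{1,3}\}$, which is valid because $N\ge 3$, so $H(A_1\mid\mathcal{Q},W_k)\ge H(A_1\mid \mathcal{W}\setminus\{W_{1,3}\},\mathcal{Q})\ge La/2$.

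For $2\le j\le N-1$, the files stored at servers $1,\dots,j-1$ do not include $W_{j,N}$, since its two endpoints are $j$ and $N$, both outside $[j-1]$. Hence $W_{[j-1]}\subseteq\mathcal{W}\setminus\{W_{j,N}\}$, and so $H(A_j\mid W_{[j-1]},\mathcal{Q})\ge La/2$.

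The last term $j=N$ is simply dropped, being nonnegative; in fact it is zero, since $W_{[N-1]}$ already contains every file of $K_N$. This gives $N-1$ terms, each at least $La/2$, and therefore
\[
\sum_j H(A_j\mid\mathcal{Q})\ge aL\Big(1+\tfrac{(N-1)ab}{2}\Big).
\]

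It remains to invert and simplify:
\[
R\le \frac{2}{a\,(2+(N-1)ab)} .
\]
Substituting
\[
ab=\frac{1+(K-1)e^{-\epsilon}}{1+(K-1)e^{\epsilon}}
\]
and putting everything over the common denominator $1+(K-1)e^{\epsilon}$, the factor $2+(N-1)ab$ has numerator $2(1+(K-1)e^{\epsilon})+(N-1)(1+(K-1)e^{-\epsilon})=N+1+(K-1)\big(2e^{\epsilon}+(N-1)e^{-\epsilon}\big)$. Multiplying by $a$ then gives exactly the stated expression. The equal-entropy property \eqref{eq:transitivegraphsequalentropy} is not needed for this route.

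The main obstacle is the combinatorial step: choosing $W_k$ and a file $W_{j,m}$ outside $W_{[j-1]}$ for each $j$, so that \eqref{eq:completegraphsubpackentropy} applies to as many terms as possible. The choice above, together with the hypothesis $N\ge3$ needed for the first term, is what yields the factor $N-1$.
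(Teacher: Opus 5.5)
Your proof is correct and is essentially the paper's argument: the paper re-runs the chain of Theorem~\ref{thm:rate_general_graphs} inline, with target $\theta=\{i,j\}$, $N\notin\{i,j\}$, conditions every term $m\in[N-1]$ on $\mathcal{W}\setminus\{W_{m,N}\}$, and then applies \eqref{eq:completegraphsubpackentropy}; you instead invoke Theorem~\ref{thm:rate_general_graphs} as a black box with $W_k=W_{1,2}$ and handle server~1 through $W_{1,3}$. You are also right that \eqref{eq:transitivegraphsequalentropy}, which the paper uses only to rewrite the sum as $N H(A_1\mid\mathcal{Q})$, is not needed.
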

\begin{IEEEproof} We begin the proof along the same lines of Theorem \ref{thm:rate_general_graphs}. For $N \ge 3$, without loss of generality, assume the target file index is $\theta = k = \{i,j\}$ such that $N \notin \{i,j\}$ (which is always possible by relabeling the servers). Consider a $\epsilon$-private protocol for $K_N$. We wish to bound its rate, or equivalently its expected download. By Lemma \ref{lemma:transitive_graphs}, it is sufficient to consider the equivalent protocol which satisfies the conditions \eqref{eq:transitivegraphsequalentropy} and \eqref{eq:completegraphsubpackentropy}. Now, we have the following. 
\begin{align}
    L &= H(W_{i,j}) = H(W_{i,j}|\mathcal{Q},\theta=k) \label{eq:thm:completegraphs_1} \\
    &= H(W_{i,j}|\mathcal{Q},\theta=k) - H(W_{i,j}|A_{[N]},\mathcal{Q},\theta=k) \label{eq:thm:completegraphs_2} \\
    &= I(W_{i,j};A_{[N]}|\mathcal{Q},\theta=k) \nonumber \\
    &= H(A_{[N]}|\mathcal{Q},\theta=k) - H(A_{[N]}|W_{i,j},\mathcal{Q},\theta=k) \nonumber \\
    &\le \sum_{m=1}^{N}H(A_{m}|\mathcal{Q},\theta=k) \nonumber \\
    &\qquad - H(A_{[N]}|W_{i,j},\mathcal{Q},\theta=k) \nonumber \\
    &\le \frac{K}{1+(K-1)e^{-\epsilon}}\sum_{m=1}^{N}H(A_{m}|\mathcal{Q}) \nonumber \\
    &\qquad - H(A_{[N]}|W_{i,j},\mathcal{Q},\theta=k) \label{eq:thm:completegraphs_3} \\
    &= \frac{NK}{1+(K-1)e^{-\epsilon}}H(A_{1}|\mathcal{Q}) \nonumber \\
    &\qquad - H(A_{[N]}|W_{i,j},\mathcal{Q},\theta=k), \label{eq:thm:completegraphs_4}
\end{align}
where \eqref{eq:thm:completegraphs_1} follows since the queries $\mathcal{Q}$ and the identity of the required file are independent of the required file content; \eqref{eq:thm:completegraphs_2} follows from the correctness condition; \eqref{eq:thm:completegraphs_3} is due to Lemma \ref{lemma:epsilon_privacy_theta_entropy}; and \eqref{eq:thm:completegraphs_4} holds because of \eqref{eq:transitivegraphsequalentropy}. Rearranging the above inequality gives the following:
\begin{align}
& N H(A_{1}|\mathcal{Q}) \nonumber \\
&\quad\ge \frac{1+(K-1)e^{-\epsilon}}{K}\bigg( L + H(A_{[N]}|W_{i,j},\mathcal{Q},\theta=k) \bigg) \nonumber \\
&\quad= \frac{1+(K-1)e^{-\epsilon}}{K} \Big( L \nonumber \\
&\qquad \qquad + \sum_{m=1}^{N}H(A_{m}|A_{[m-1]},W_{i,j},\mathcal{Q},\theta=k) \Big) \label{eq:thm:completegraphs_4_2} \\
&\quad= \frac{1+(K-1)e^{-\epsilon}}{K} \Big( L \nonumber \\
&\qquad \qquad + \sum_{m=1}^{N-1}H(A_{m}|A_{[m-1]},W_{i,j},\mathcal{Q},\theta=k) \Big) \label{eq:thm:completegraphs_5} \\
&\quad\ge \frac{1+(K-1)e^{-\epsilon}}{K} \Big( L \nonumber \\
&\qquad \qquad + \sum_{m=1}^{N-1}H(A_{m}|A_{[m-1]},\mathcal{W}\setminus\{W_{m,N}\},\mathcal{Q},\theta=k) \Big) \label{eq:thm:completegraphs_6} \\
&\quad= \frac{1+(K-1)e^{-\epsilon}}{K} \Big( L \nonumber \\
&\qquad \qquad + \sum_{m=1}^{N-1}H(A_{m}|\mathcal{W}\setminus\{W_{m,N}\},\mathcal{Q},\theta=k) \Big) \label{eq:thm:completegraphs_7} \\
&\quad\ge \frac{1+(K-1)e^{-\epsilon}}{K} \Big( L \nonumber \\
&\qquad \qquad + \frac{K}{1+(K-1)e^{\epsilon}} \sum_{m=1}^{N-1}H(A_{m}|\mathcal{W}\setminus\{W_{m,N}\},\mathcal{Q}) \Big) \label{eq:thm:completegraphs_8} \\
&\quad\ge \frac{1+(K-1)e^{-\epsilon}}{K}\left(L+\frac{L(N-1)(1+(K-1)e^{-\epsilon})}{2(1+(K-1)e^{\epsilon})}\right) \label{eq:thm_comp1:9} \\
\nonumber &=\frac{L \big(1+(K-1)e^{-\epsilon}\big)}{2K \big(1+(K-1)e^{\epsilon}\big)} \\
&\qquad\times \Big[ N + 1 + (K-1)\big(2e^{\epsilon} + (N-1)e^{-\epsilon}\big) \Big], \nonumber
\end{align}

 where \eqref{eq:thm:completegraphs_4_2} follows from the chain rule of entropy; \eqref{eq:thm:completegraphs_5} follows from the non-negativity of entropy; \eqref{eq:thm:completegraphs_6} follows since conditioning reduces entropy (and because $N \notin \{i,j\}$, the target file $W_{i,j}$ is always contained within $\mathcal{W}\setminus\{W_{m,N}\}$); \eqref{eq:thm:completegraphs_7} follows because the answers $A_{[m-1]}$ are deterministic functions of $\mathcal{Q}$ and $\mathcal{W}\setminus\{W_{m,N}\}$ (since the excluded file $W_{m,N}$ connects vertex $m$ and $N$, it is not stored at any server in $[m-1]$); \eqref{eq:thm:completegraphs_8} follows from Lemma \ref{lemma:epsilon_privacy_theta_entropy}; and \eqref{eq:thm_comp1:9} follows from \eqref{eq:completegraphsubpackentropy}.
\end{IEEEproof}

\begin{remark}
    Let $R_{\mathrm{bound}}(\epsilon)$ denote the right-hand side of Theorem~\ref{thm:rate_complete_graphs}. It equals $2/(N+1)$ at $\epsilon=0$ (which is the bound reported by \cite{Sadehetal_TIT_2023_BoundsonPIRgraphs}), but exceeds $1$ (and hence becomes vacuous) already at moderate values of $\epsilon$ (numerically, for $\epsilon$ approximately in the range $0.58$--$0.74$ when $3\le N\le 10$), and tends to $K$ as $\epsilon\to\infty$. This is seen in Fig.~\ref{fig:rateprivacy_converse}, where the converse is truncated at $R=1$, so the valid bound is $\min\{1,R_{\mathrm{bound}}(\epsilon)\}$ (recall that $R\le 1$ trivially). The looseness stems from Lemma~\ref{lemma:epsilon_privacy_theta_entropy}, which relates $H(A_n\mid\mathcal{Q},W_\Omega,\theta=k)$ to the $\theta$-average $H(A_n\mid\mathcal{Q},W_\Omega)$ using only the pairwise ratio bounds $e^{\pm\epsilon}$; the resulting factors $K/(1+(K-1)e^{\pm\epsilon})$ approach $0$ and $K$ as $\epsilon$ grows. 
    The lemma does not exploit that a server in $K_N$ stores only $N-1$ of the $K$ files, so its answer can depend on the desired index only through these files. Incorporating this structure could lead to a non-trivial converse for larger $\epsilon$.
\end{remark}

\section{Special Graphs}
In this section, we specialize the $\epsilon$-private scheme for two specific graph classes: complete graphs and complete bipartite graphs. Taking advantage of the inherent symmetry and structural properties of these specific graph families,establish the explicit bias probability requirements needed to satisfy the $\epsilon$-privacy constraint. We utilize the expected download cost expressions for these specific graph classes previously derived in \cite[Section V]{Vidya_ISITA2026} and evaluate their performance under the $\epsilon$-privacy constraint. Furthermore, we illustrate the achievable rate-privacy trade-offs for both graph classes. For complete graphs, we also plot the achievable rate alongside the rate upper bound established in Theorem \ref{thm:rate_complete_graphs}.

\subsection{\textbf{Complete Graphs} $K_N$}
\begin{itemize}
    \item \textit{Download Cost:} In a complete graph, $K = N(N-1)/2$. Then the expected download cost is
    \begin{align*}
        \mathbb{E}[D_{K_N}(p)] &= (1-p) \nonumber \\
        &\quad + \sum_{s=2}^{N-1}\left[1 - \frac{p^s}{K}\left(K - \frac{(s-1)(2p-1)}{p}\right)\right] \nonumber \\
        &\quad + \left[1 - \frac{p^{N-1}}{K}\left(K - \frac{(N-1)(2p-1)}{p}\right)\right]
    \end{align*}
    and rate $R(p) = 1/\mathbb{E}[D_{K_N}(p)]$.
    \item \textit{Privacy Requirement:} For $N \ge 3$, the bias probability $p$ is bounded by
    \begin{equation*}
        p \le \frac{e^{\epsilon/2}}{1 + e^{\epsilon/2}}.
    \end{equation*}
\end{itemize}

\subsection{\textbf{Complete Bipartite Graphs $K_{N/2,N/2}$}}
We consider the complete bipartite graph with $N/2$ vertices on each side, and all possible edges present. 
\begin{itemize}
    \item \textit{Download Cost:} The expected download cost evaluates to
    {\small
    \begin{equation*}
         \mathbb{E}[D_{K_{N/2,N/2}}(p)] = \frac{N}{2}(1-p) + \frac{N}{2} - p^{\frac{N}{2}-1}\left[1+p\left(\frac{N}{2}-2\right)\right]
    \end{equation*}
    }
    and rate $R(p) = 1/\mathbb{E}[D_{K_{N/2,N/2}}(p)]$.
    \item \textit{Privacy Requirement:} For $N \ge 4$, the bias probability $p$ is bounded by
    \begin{equation*}
        p \le \frac{e^{\epsilon/2}}{1 + e^{\epsilon/2}}.
    \end{equation*}
\end{itemize}

\begin{figure}[H]
    \centering
    \includegraphics[width=\columnwidth]{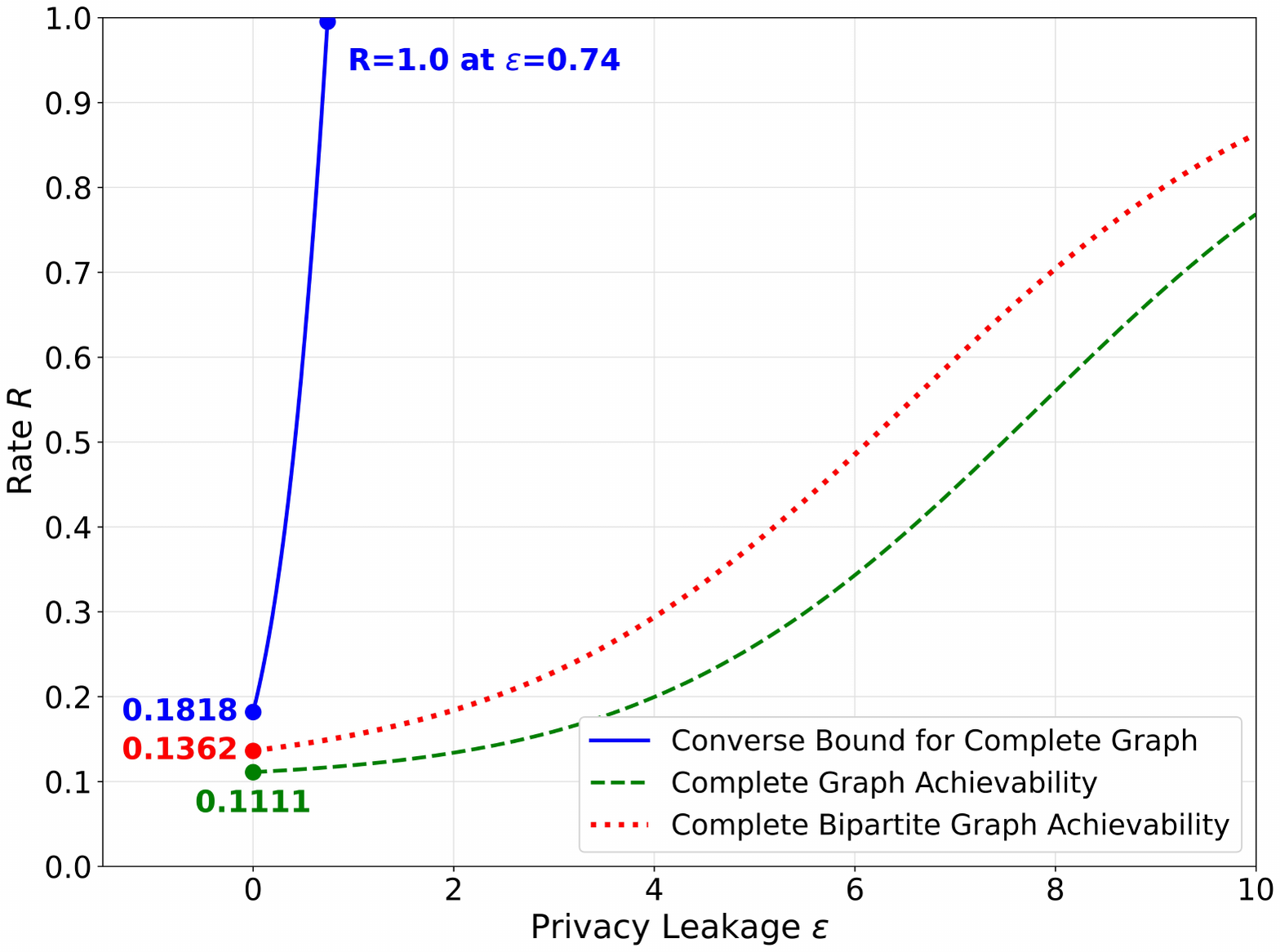}
    \caption{Achievable Rate vs $\epsilon$-Leakage with $N = 10$ for Complete and Complete Bipartite Graphs. The converse bound from Theorem \ref{thm:rate_complete_graphs} is also shown.}
    \label{fig:rateprivacy_converse}
\end{figure}
\section{Conclusion}
In this work, we have studied the problem of Weak-PIR over graph-based storage systems subject to a strict $\epsilon$-privacy constraint. We adapted an independent-set-based PIR protocol for arbitrary graphs and explicitly characterized its pointwise $\epsilon$-leakage guarantees. Furthermore, we developed a framework for bounding the download cost of $\epsilon$-private protocols for arbitrary graphs, which subsequently enabled the derivation of a specialized rate upper bound for complete graphs. We note that this upper bound is informative for small $\epsilon$ only, and in the perfect privacy regime ($\epsilon=0$), it is weaker than the capacity bounds established by Ge et al. \cite{ge2025privateinformationretrievalgraphs}. Finally, we evaluated the performance of the proposed scheme on complete and complete bipartite graphs, illustrating smooth rate-privacy trade-offs. Future research directions include developing tighter converse bounds for general graphs and investigating optimal WPIR protocols for specialized graph classes.

\bibliographystyle{IEEEtran}
\bibliography{arxiv_ref.bib}

\end{document}